\title{Data-Driven Synthesis of Provably Sound Side~Channel Analyses}
\author{
  \IEEEauthorblockN{
    Jingbo Wang,
    Chungha Sung,
    Mukund Raghothaman and 
    Chao Wang
  }
  \IEEEauthorblockA{
    University of Southern California 
  }
}
\begin{document}
\maketitle

\begin{abstract}
We propose a \emph{data-driven} method for synthesizing static
analyses to detect side-channel information leaks in cryptographic
software.  Compared to the conventional way of manually crafting such
static analyzers, which can be tedious, error prone and suboptimal,
our \emph{learning-based} technique is not only automated but also
provably sound.
Our analyzer consists of a set of type-inference rules learned from
the training data, i.e., example code snippets annotated with the ground
truth.  Internally, we use \emph{syntax-guided synthesis (SyGuS)} to
generate new recursive features and \emph{decision tree learning (DTL)} to
generate analysis rules based on these features.  We guarantee
soundness by proving each learned analysis rule via a technique called
\emph{query containment checking}.
We have implemented our technique in the LLVM compiler and used it to
detect \emph{power side channels} in C programs that implement
cryptographic protocols.  Our results show that, in addition to being
automated and provably sound during synthesis, our analyzer
can achieve the same empirical accuracy as two state-of-the-art,
manually-crafted analyzers while being 300X and 900X faster,
respectively.
\end{abstract}

\section{Introduction}
\label{sec:introduction}

Static analyses are being increasingly used to detect security
vulnerabilities such as \emph{side channels}~\cite{wang2019mitigating,
zhang2018sc,chen2017precise,brennan2018symbolic}.  However, manually
crafting static analyzers to balance between accuracy and efficiency
is not an easy task: even for domain experts, it can be labor
intensive, error prone, and result in suboptimal implementations.
For example, we may be tempted to add expensive analysis rules for
specific sanitized patterns without realizing they are rare in target
programs.  Even if the analysis rules are carefully tuned to a corpus
of code initially, they are unresponsive to changing characteristics
of the target programs and thus may become suboptimal over time;
manually updating them to keep up with new programs would be
difficult.

One way to make better accuracy-efficiency trade-offs and to
dynamically respond to the distribution of target programs is to use
data-driven approaches~\cite{bielik2017learning,mendis2019compiler}
that automatically synthesize analyses from labeled examples provided
by the user.  However, checking soundness or compliance with user
intent (generalization) has always formed a significant challenge for
example-based synthesis techniques~\cite{gulwani2012spreadsheet,
  leung2015interactive, singh2012synthesizing, smith2016mapreduce,
  solar2006combinatorial}.  The lack of soundness guarantees, in
particular, hinders the application of such learned analyzers in
security-critical applications. While several existing
works~\cite{an2019augmented, mayer2015user, singh2015predicting,
  devlin2017robustfill} try to address this problem, rigorous
soundness guarantees have remained elusive.

To overcome this problem, we propose a learning-based method for 
synthesizing a
\emph{provably-sound} static analyzer that detects side channels
in cryptographic software, by inferring a
\emph{distribution} type for each program variable that indicates
if its value is statistically dependent on the secret.
The overall flow of our method, named \GPS{}, is shown in
Fig.~\ref{overview}. The input is a set of training data and the
output is a learned analyzer.  The training data are small programs
annotated with the ground truth, e.g., which program variables have
leaks.

\setlength{\textfloatsep}{10pt plus 1.0pt minus 2.0pt}
\begin{figure}
\centering
\scalebox{0.85}{\begin{tikzpicture}[font=\scriptsize] 
 \tikzstyle{arrow1}=[thick,->,>=stealth,darkmain]
 \tikzstyle{arrow2}=[thick,->,>=stealth,black]

\tikzstyle{inBox}=[%
rectangle,draw, minimum width=1.5cm, minimum height=1.1cm, text width=1.3cm, inner sep = 0cm, outer sep = 0cm, align=center
]

\node at (0,0) [inBox, xshift=0.1mm, yshift=5mm] (b0) {Feature \\ Synthesis \\ (SyGuS)};
\node at (0,-1) [inBox, xshift=0.1mm, yshift=4mm] (b1) {Decision \\ Tree \\ Learning};
\node at (2,0) [inBox, xshift=5.1mm, yshift=5mm] (b2) {Query \\ Containment \\ Checking};
\node at (2,-1) [inBox, xshift=5.1mm, yshift=4mm] (b3) {Knowledge \\ Base \\ (KB)};


\tikzstyle{doc}=[%
minimum height=2em,
minimum width=4em,
draw, fill=white
]
   \node[doc] (doc) at (-2.5,0.7)  {};
   \node[doc, below left = 4pt and 4pt of doc.north east] (doc2) {};
   \node[doc, below left = 4pt and 4pt of doc2.north east] (doc3) {};

\tikzstyle{doc}=[%
minimum height=2em,
minimum width=4em,
draw, fill=white
]
   \node[doc] (doc) at (-2.5,-0.8)  {};
   \node[doc, below left = 4pt and 4pt of doc.north east] (doc2) {};
   \node[doc, below left = 4pt and 4pt of doc2.north east] (doc3) {};
   
\node at (-2.8,0.5) {Training};
\node at (-2.8,0.2) {Programs};

\node at (-2.8,-1) {Type};
\node at (-2.8,-1.2) {Annotations};

 \node[minimum width=17mm, minimum height=27mm, dashed, draw=darkmain, thick, align=center] (PR1) at (0.01, -0.2) {};   
 \node[text width=1.6cm, align=center] at (0.01, -1.35) {\textcolor{darkmain}{\textbf{Learner}}};
\node[minimum width=17mm, minimum height=27mm, dashed, draw=darkmain, thick, align=center] (PR2) at (2.5, -0.2) {};  
   \node[text width=1.6cm, align=center] at (2.5, -1.35) {\textcolor{darkmain}{\textbf{Prover}}};
   

  \tikzstyle{doc}=[%
minimum height=2em,
minimum width=3em,
draw, fill=white
]
   \node[doc] (doc) at (5.6,0.6)  {};
   \node[doc,align=center, inner sep=0cm, text width= 1.2cm, below left = 4pt and 4pt of doc.north east] (ANB) {Analyzer};

        \tikzstyle{doc}=[%
minimum height=2em,
minimum width=3em,
draw, fill=white
]
   \node[doc] (doc) at (5.6,-0.6)  {};
   \node[doc,align=center, inner sep=0cm, text width= 1.2cm, below left = 4pt and 4pt of doc.north east] (CEB) {Counter- \\ example};

   \node[rotate=22] at (4.05,0.4) {\color{darkmain} $R$ \textit{Verified}};
      \node[rotate=-25] at (4.05, -0.6) {\color{darkmain} $R$ \textit{Rejected}};
      
      \node  at (1.5, 0.2) {\textcolor{darkmain}{R}};
   \draw[arrow1,darkmain] (PR1.20) |- (PR2.170);
   \draw[arrow2,darkmain] (PR2.10) -- (ANB.west);
   \draw[arrow2,darkmain] (PR2.10) -- (CEB.west) ;
   
   \draw[arrow1,darkmain] (-1.8, 0.55) -- (-0.8, 0.55);
   
   \draw[arrow1,darkmain] (-1.8, -0.9) -- (-0.8, -0.9);

   \draw[arrow2,darkmain] (0.0, -2.0) -- (PR1.south);

   \draw [thick,darkmain] (CEB.south) -- (5.4, -2.0);
   \draw [thick,darkmain] (5.4, -2.0) -- (0.0, -2.0); 
\end{tikzpicture}}
\vspace{-.5em}
\caption{The overall flow of \GPS{}, our data-driven synthesis method.}
\label{overview}
\end{figure}
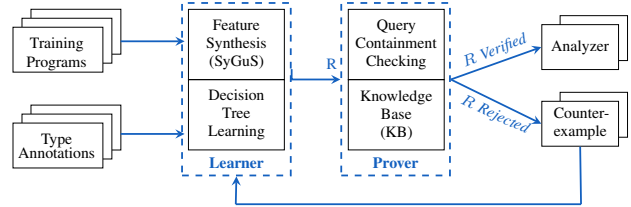

Internally, \GPS{} consists of a \emph{learner} and a \emph{prover}.
The \emph{learner} uses syntax guided synthesis (SyGuS) to generate
recursive features and decision tree learning (DTL) to generate
type-inference rules based on these features; it returns a set $R$ of
Datalog formulas that codify these rules.  The \emph{prover} checks
the soundness of each learned rule, i.e., it is not only consistent
with the training examples but also valid for any unseen
programs. This is formulated by solving a \emph{query containment
checking} problem, i.e., each rule must be justified by existing proof
rules called the knowledge base ($KB$).  Since only proved rules are
added to the analyzer, the analyzer is guaranteed to be sound. If a
rule cannot be proved, we add a counter-example to prevent
the \emph{learner} from generating it again.

We have implemented \GPS{} in LLVM and evaluated it on 568 C programs
that implement cryptographic protocols and
algorithms~\cite{barthe2015verified,blomer2004provably,bayrak2013sleuth}.
Together, they have 2,691K lines of C code.  We compared our learned
analyzer with two state-of-the-art, hand-crafted side-channel analysis
tools~\cite{zhang2018sc,wang2019mitigating}.
Our experiments show that the learned analyzer achieves the same
empirical accuracy as the two state-of-the-art tools, while being
several orders-of-magnitude faster.  Specifically, \GPS{} is, on
average, 300$\times$ faster than the analyzer from
\cite{wang2019mitigating} and 900$\times$ faster than the analyzer
from \cite{zhang2018sc}.

To summarize, this paper makes the following contributions:
\begin{itemize}
\item
We propose the first data-driven method for learning a provably sound
static analyzer using syntax guided synthesis (SyGuS) and decision
tree learning (DTL).
\item
We guarantee soundness by formulating and solving a Datalog query
containment checking problem.
\item
We demonstrate the effectiveness of our method for
detecting side channels in cryptographic software.
\end{itemize}


In the remainder of this paper, we begin by presenting the 
technical background in Section~\ref{sec:prelims} and our motivating
example in Section~\ref{sec:motivation}.
We then describe the \emph{learner} in Section~\ref{sec:learner} and
the \emph{prover} in Section~\ref{sec:prover}, followed by the
experimental results in Section~\ref{sec:experiment}.  Finally, we
survey the related work in Section~\ref{sec:related} and conclude in
Section~\ref{sec:conclusion}.

\section{Preliminaries}
\label{sec:prelims}


\subsection{Power Side-Channels}

Prior works in side-channel
security~\cite{ishai2003private,balasch2014cost,coron2012conversion}
show that variance in the power consumption of a computing device may
leak secret information; for example, when a secret value is stored in
a physical register, its number of logical-1 bits may affect the power
consumption of the CPU.  Such side-channel leaks are typically
mitigated by \emph{masking}, e.g., using $d$ random bits
($r_1,\ldots,r_{d}$) to split a $key$ bit into $d + 1$ secret shares:
$key_1 = r_1$, $\ldots$, $key_d = r_d$, and $key_{d+1} = r_1 \oplus
r_2 \ldots \oplus r_d \oplus key$, where $\oplus$ denotes the logical
operation \emph{exclusive or} (XOR).  Since all $d+1$ shares are
uniformly distributed in the $\{0,1\}$, in theory, this
\emph{order-d masking} scheme is secure in that any combination of
less than $d$ shares cannot reveal the secret, but combining all $d+1$
shares, $key_1 \oplus key_2 \oplus ... key_{d+1}$ = $key$, recovers
the secret.

In practice, masking countermeasures must also be implemented properly to avoid
de-randomizing any of the secret shares accidentally.  Consider $t$ =
$t_L \oplus t_R$ = ($r_1 \oplus key$) $\oplus$ ($r_1 \oplus b$) = $key
\oplus b$.  While syntactically dependent on the two randomized values
$t_L$ and $t_R$, $t$ is in fact leaky because, semantically, it does
not depend on the random input $r_1$.
In this work, we aim to learn a static analyzer that can soundly prove
that \emph{all intermediate variables of a program} that implements 
masking countermeasures are free of such leaks.

\subsection{Type Systems}
\label{typeSys}

Type systems prove to be effective in analyzing power side
channels~\cite{zhang2018sc,wang2019mitigating}, e.g., by certifying
that all intermediate variables of a program are \emph{statistically
independent} of the secret.
Typically, the program inputs are marked as public (\texttt{INPUB}),
secret (\texttt{INKEY}) or random (\texttt{INRAND}), and then the
types of all other program variables are inferred automatically.

The type of a variable $v$, denoted \texttt{TYPE(v)}, may be
\texttt{RUD}, \texttt{SID}, or \texttt{UKD}.
Here, \texttt{RUD} stands for random uniform distribution, meaning $v$
is either a random bit or being masked by a random bit.
\texttt{SID} stands for secret independent distribution, meaning 
$v$ does not depend on the secret.
While an \texttt{RUD} variable is, by definition, also \texttt{SID},
an \texttt{SID} variable does not have to be \texttt{RUD} (e.g.,
variables that are syntactically independent of the secret).
Finally, \texttt{UKD} stands for unknown distribution, or potentially
leaky; if the analyzer cannot prove $v$ to be \texttt{RUD}
or \texttt{SID}, then it is assumed to be \texttt{UKD}.

Type systems are generally designed to be sound but not necessarily
complete.
They are sound in that they never miss real leaks. 
For example, by default, they may safely assume that all variables are
\texttt{UKD}, unless a variable is specifically elevated to $\SID$ or $\RUD$ 
by an analysis rule.
Similarly, they may conservatively classify \texttt{SID} variables as
\texttt{UKD}, or classify \texttt{RUD} variables as \texttt{SID},
without missing real leaks. 
In general, the sets of variables that can be marked as the three
types form a hierarchy: $S_{\texttt{RUD}}$ $\subseteq$
$S_{\texttt{SID}}$ $\subseteq$ $S_{\texttt{UKD}}$.

\subsection{Relations}

A program in static single assignment (SSA) format can be represented 
as an abstract syntax tree (AST).
Static analyzers infer the type of 
each node $x$ of the program's AST
based on various \emph{features} of $x$.
In this work, pre-defined features are represented as
\emph{relations}.
\begin{itemize}
\item 
Unary relations $\INPUBOP(x)$, $\INKEYOP(x)$, and $\INRANDOP(x)$
denote the given security level of a program input $x$, which may be
public, secret, or random.
\item 
Unary relations $\RUD(x)$, $\SID(x)$, and $\INRANDOP(x)$ denote the
inferred type of a program variable $x$, which may be uniformly
random, secret independent, or unknown.
\item 
Unary relation $\OPOP(x)$ denotes the operator type of the AST node
$x$, e.g., $\OPOP(x)$ := $\ANDOROP(x)$~$|$~$\XOROP(x)$, where
$\ANDOROP(x)$ means that $x$'s operator type is either \emph{logical
  and} or \emph{logical or}, and $\XOROP(x)$ means that $x$'s operator
type is \emph{exclusive or};
\item 
Binary relations $\LCOP(x,L)$ and $\RCOP(x, R)$ indicate that the AST
nodes $L$ and $R$ are the left and right operands of $x$,
respectively.
\item 
Binary relation $\SUPPOP(x,y)$ indicates that the AST node $y$ is used
in the computation of $x$ syntactically, while $\DOMOP(x,y)$ indicates
that random program input $y$ is used in the computation of $x$
semantically.
\end{itemize} 
%
%

\section{Motivation}
\label{sec:motivation}

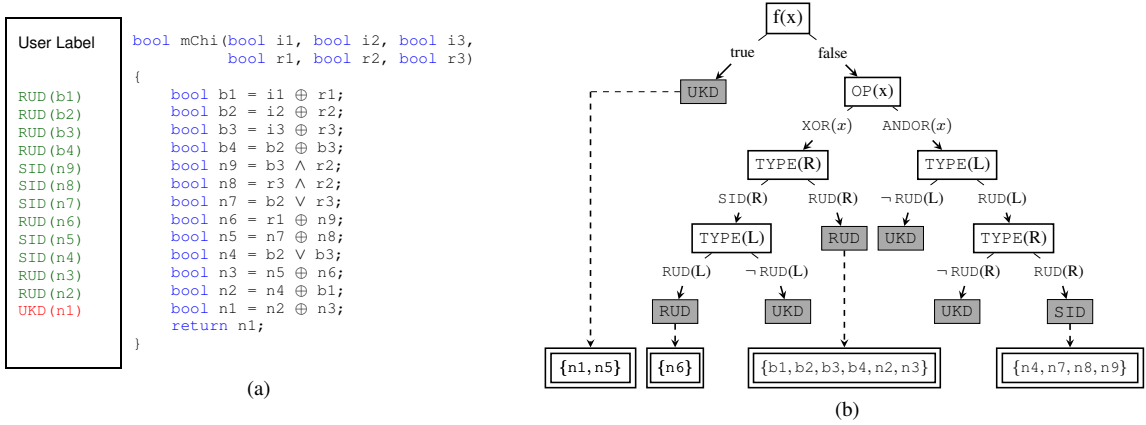
\begin{figure*} [t!] 
  \centering
  \subfloat[]
  {
  \hspace{-1em}
    \scalebox{0.75}{
      \renewcommand{\arraystretch}{1.3}
      \begin{tabular}{|p{0.1\textwidth}@{}|p{0.35\textwidth}}
        \cline{1-1}
        %
        \begin{minipage}{0.16\textwidth}
          \textcolor{DarkGreen}{
            \lstinputlisting{images/motivation/ua-brief.cbool}
          }
        \end{minipage}
        &
        \begin{minipage}{0.8\textwidth}
          \lstinputlisting{images/motivation/keccak-brief.cbool}
        \end{minipage}
        \\
        \cline{1-1}
      \end{tabular}
    }
    \label{sfig:motivation:example:program}
  }
  %
  %
  \subfloat[]
  {
  \adjustbox{raise=-6pc}{
    \scalebox{0.75}{
      \begin{tikzpicture}[font=\footnotesize]
 \tikzstyle{arrow1}=[thick,->,>=stealth,darkmain]
 \tikzstyle{arrow2}=[thick,->,>=stealth,black]
 \tikzstyle{root}=[rectangle,draw, align=center, thick]
 \tikzstyle{leaf}=[rectangle, draw, align=center, fill={rgb:black,1;white,2}]
 
 \node at (0,0) [root] (root0) {\normalsize{f(x)}};
 
 \node at (-.8, -.6) [] (rootConditionL) {\color{black} true};
 \node at (.8, -.6) [] (rootConditionR) {\color{black} false};
 
 \draw (root0) -- (rootConditionL);
 \draw (root0) -- (rootConditionR);
 
 \node at (-1.5, -1.3) [leaf] (branch11) {\small{$\UKD$}};
 \node at (1.5, -1.3) [root] (branch12) {\small{$\OPOP$(x)}};
 
 \draw [arrow2] (rootConditionL) -- (branch11);
 \draw [arrow2] (rootConditionR) -- (branch12);
 
 \node at (1.5-.8, -1.3-.6) [] (OPConditionL) {\color{black} $\XOROP(x)$};
 \node at (1.5+.8, -1.3-.6) [] (OPConditionR) {\color{black} $\ANDOROP(x)$};
 
 \draw (branch12) -- (OPConditionL);
 \draw (branch12) -- (OPConditionR);
 
 \node at  (1.5-.8-.7, -1.3-.6-.7) [root] (branch21) {\small{$\TYPE$(R)}};
 \node at (1.5+.8+.7, -1.3-.6-.7) [root] (branch22) {\small{$\TYPE$(L)}};
 
  \draw [arrow2] (OPConditionL) -- (branch21);
 \draw [arrow2] (OPConditionR) -- (branch22);

  \node at (1.5-.8-.7-.8, -1.3-.6-.7-.6) [] (TYPER_conditionL) {\color{black} $\SID$(R)};
  \node at (1.5-.8-.7+.8, -1.3-.6-.7-.6) [] (TYPER_conditionR) {\color{black} $\RUD$(R)};
  
  \draw (branch21) -- (TYPER_conditionL);
  \draw (branch21) -- (TYPER_conditionR);
  
  \node at (1.5-.8-.7-.8-.2, -1.3-.6-.7-.6-.7) [root] (branch31) {\small{$\TYPE$(L)}};
  \node at (1.5-.8-.7+.8+.2, -1.3-.6-.7-.6-.7) [leaf] (branch32) {\small{$\RUD$}};
  
  \draw [arrow2] (TYPER_conditionL) -- (branch31);
  \draw [arrow2] (TYPER_conditionR) -- (branch32);
  
  \node at (1.5+.8+.7-.8, -1.3-.6-.7-.6) (TYPEL_conditionL) {\color{black} $\neg\RUD$(L)};
  \node at (1.5+.8+.7+.8, -1.3-.6-.7-.6) (TYPEL_conditionR) {\color{black} $\RUD$(L)};
  
  \draw (branch22) -- (TYPEL_conditionL);
  \draw (branch22) -- (TYPEL_conditionR);
  
  \node at (1.5+.8+.7-.8-.2, -1.3-.6-.7-.6-.7) [leaf] (branch33) {\small{$\UKD$}};
  \node at (1.5+.8+.7+.8+.2, -1.3-.6-.7-.6-.7) [root] (branch34) {\small{$\TYPE$(R)}};
  
  \draw [arrow2] (TYPEL_conditionL) -- (branch33);
  \draw [arrow2] (TYPEL_conditionR) -- (branch34);
  
  \node at (1.5-.8-.7-.8-.2-.8, -1.3-.6-.7-.6-.7-.6) [] (TYPEL_conditionL2) {\color{black} $\RUD$(L)};
   \node at (1.5-.8-.7-.8-.2+.8, -1.3-.6-.7-.6-.7-.6) [] (TYPEL_conditionR2) {\color{black} $\neg\RUD$(L)};
   
  \draw (branch31) -- (TYPEL_conditionL2);
  \draw (branch31) -- (TYPEL_conditionR2);
  
   \node at (1.5-.8-.7-.8-.2-.8-.2, -1.3-.6-.7-.6-.7-.6-.7) [leaf] (branch41) {\small{$\RUD$}};
   \node at (1.5-.8-.7-.8-.2+.8+.2, -1.3-.6-.7-.6-.7-.6-.7) [leaf] (branch42) {\small{$\UKD$}};
   
   \draw [arrow2] (TYPEL_conditionL2) -- (branch41);
  \draw [arrow2] (TYPEL_conditionR2) -- (branch42);
  
  \node at (1.5+.8+.7+.8+.2-.8, -1.3-.6-.7-.6-.7-.6) [] (TYPER_conditionL2) {\color{black} $\neg\RUD$(R)};
  \node at (1.5+.8+.7+.8+.2+.8, -1.3-.6-.7-.6-.7-.6) [] (TYPER_conditionR2) {\color{black} $\RUD$(R)};
  
  \draw (branch34) -- (TYPER_conditionL2);
  \draw (branch34) -- (TYPER_conditionR2);
  
  \node at (1.5+.8+.7+.8+.2-.8-.2, -1.3-.6-.7-.6-.7-.6-.7) [leaf] (branch43) {\small{$\UKD$}};
  \node at (1.5+.8+.7+.8+.2+.8+.2, -1.3-.6-.7-.6-.7-.6-.7) [leaf] (branch44) {\small{$\SID$}};
  
     \draw [arrow2] (TYPER_conditionL2) -- (branch43);
  \draw [arrow2] (TYPER_conditionR2) -- (branch44);
  
  \node at (1.5-.8-.7-.8-.2-.8-.2-1.5, -1.3-.6-.7-.6-3) [root, minimum width=14mm, minimum height=5mm] (example1) {$\{ \mathlst{n1}, \mathlst{n5}\}$};
  
   \node at (1.5-.8-.7-.8-.2-.8-.2-1.5, -1.3-.6-.7-.6-3) [root, minimum width=16mm, minimum height=7mm] (example1Out) {$\{ \mathlst{n1}, \mathlst{n5}\}$};
  
  \node at (1.5-.8-.7-.8-.2-.8-.2, -1.3-.6-.7-.6-3) [root, minimum width=7mm, minimum height=5mm] (example2) {$\{ \mathlst{n6} \}$};
  \node at (1.5-.8-.7-.8-.2-.8-.2, -1.3-.6-.7-.6-3) [root, minimum width=10mm, minimum height=7mm] (example2Out) {$\{ \mathlst{n6} \}$};
  
   \node at (1.5-.8-.7+.8+.2, -1.3-.6-.7-.6-3) [root, minimum width=31mm, minimum height=5mm] (example3) {$\{ \mathlst{b1}, \mathlst{b2}, \mathlst{b3}, \mathlst{b4}, \mathlst{n2}, \mathlst{n3} \}$};
   
   \node at (1.5-.8-.7+.8+.2, -1.3-.6-.7-.6-3) [root, minimum width=34mm, minimum height=7mm] (example3Out) {};
   
   \node at (1.5+.8+.7+.8+.2+.8+.2, -1.3-.6-.7-.6-3) [root, minimum width=24mm, minimum height=5mm] (example4) {$\{ \mathlst{n4}, \mathlst{n7}, \mathlst{n8}, \mathlst{n9}\}$};
    \node at (1.5+.8+.7+.8+.2+.8+.2, -1.3-.6-.7-.6-3) [root, minimum width=26mm, minimum height=7mm] (example4Out) {};
  
  \draw [dashed] (branch11) -- (1.5-.8-.7-.8-.2-.8-.2-1.5, -1.3);
  \draw [arrow2, dashed] (1.5-.8-.7-.8-.2-.8-.2-1.5, -1.3) -- (example1Out);
  \draw [arrow2, dashed] (branch41) --  (example2Out);
  \draw [arrow2, dashed] (branch32) --  (example3Out);
  \draw [arrow2, dashed] (branch44) --  (example4Out);
  
 \end{tikzpicture}
    }
    \label{sfig:motivation:example:tree4}
    }
  }
  \caption{The program on the left is a perfectly masked $\chi$
    function from MAC-Keccak.
    %
    The decision tree on the right represents the static analyzer that the user would like to synthesize. Here, $x$ is a program variable, whose
    type is being computed;  $L$ and $R$ are its
    left and right operands, respectively, and $f(x)$ is a synthesized
    feature shown in Fig.~\ref{sfig:motivation:setting:rules:learned}
    (represented by recursive Datalog program).
%
}
  \label{fig:motivation:example}
  \vspace{-1em}
\end{figure*}

Consider the program in Fig.~\ref{sfig:motivation:example:program},
which computes the $\chi$ function from Keccak, a family of
cryptographic primitives for the SHA-3
standard~\cite{nist2012keccak,bertoni2012keccak}. It ultimately
computes the function 
\lstinline|n1 = i1 $\oplus$ ($\lnot$ i2 $\land$ i3)|,  
where $\oplus$ means XOR.
Unfortunately, a straightforward implementation could potentially leak
knowledge of the secret inputs \lstinline|i1|, \lstinline|i2| and
\lstinline|i3| if the attacker were able to guess the intermediate
results \lstinline|$\lnot$ i2| and \lstinline|$\lnot$ i2 $\land$ i3|
via the \emph{power}
side-channels~\cite{eldib2014synthesis,barthe2018maskverif}.  The
masking countermeasures in the implementation therefore use three
additional random bits \lstinline|r1|, \lstinline|r2| and
\lstinline|r3| to prevent exposure of the private inputs while still
computing the desired function.

\setlength{\textfloatsep}{10pt plus 1.0pt minus 2.0pt}
\begin{figure}[t]
 \centering
\footnotesize{
  \subfloat[Excerpt of rules learned by the \GPS{} tool.]
  {
    $
      \begin{array}{rrcl}
        R_1: & \RUD(x) & \gets & \XOROP(x) \land \RCOP(x, R) \land \RUD(R) \land \lnot f(x) \\
        R_2: & \textcolor{black}{f(x)} & \gets & \LCOP(x, L) \land \RCOP(x, R) \land g_1(L, r_L) \land \\
             &      &        & g_2(R, r_R) \land r_L = r_R \\
        R_3: & g_1(r, r) & \gets & \INRANDOP(r) \\
        R_4: & g_1(x, r) & \gets & \LCOP(x, y) \land g_1(y, r) \\
        R_5: & g_1(x, r) & \gets & \RCOP(x, y) \land g_1(y, r) \\
        R_6: & g_2(r, r) & \gets &\INRANDOP(r) \\
        R_7: & g_2(x, r) & \gets & \LCOP(x, y) \land g_2(y, r) \land \XOROP(x) \\
        R_8: & g_2(x, r) & \gets & \RCOP(x, y) \land g_2(y, r) \land \XOROP(x) 
      \end{array}
    $ 
    \label{sfig:motivation:setting:rules:learned}
  }
  \vspace{1ex}
    
  \subfloat[Corresponding expert written rules from SCInfer~\cite{zhang2018sc}.]
  {
    $
      \begin{array}{rrcl}
        M_1: & \RUD(x) & \gets & \XOROP(x) \land \DOMOP(x, \text{res}) \land \text{res} \neq \emptyset \\
        M_2: & \SUPPOP(x, x) & \gets & \INRANDOP(x) \lor \INKEYOP(x) \lor \INPUBOP(x) \\
        M_3: & \SUPPOP(x, \text{res}) & \gets & \LCOP(x, L) \land \RCOP(x, R) \land \SUPPOP(L, S_L) \land \\
             &  &  & \SUPPOP(R, S_R) \land \text{res} = S_L \union S_R \\
        M_4: & \DOMOP(x, x) & \gets & \INRANDOP(x) \\
        M_5: & \DOMOP(x, \emptyset) & \gets & \INKEYOP(x) \lor \INPUBOP(x) \\
        M_6: & \DOMOP(x, \text{res}) & \gets & \XOROP(x) \land \LCOP(x, L) \land \RCOP(x, R) \land \\
             &    &    & \DOMOP(L, S_{DL}) \land \DOMOP(R, S_{DR}) \land \\
             &    &    & \SUPPOP(L, S_L) \land \SUPPOP(R, S_R) \land \\
             &    &    & \text{res} = (S_{DL} \union S_{DR}) \setminus (S_L \union S_R)
      \end{array}
    $
    \label{sfig:motivation:setting:rules:scinfer}
  }
}
  \caption{Comparing the rules learned by \GPS{}
    (Fig.~\ref{sfig:motivation:setting:rules:learned}) to manually
    crafted rules from SCInfer
    (Fig.~\ref{sfig:motivation:setting:rules:scinfer}). 
%
%
    Observe that the learned rules are \emph{sound}, i.e., every variable which potentially leaks information is assigned the distribution type $\UKD$, while still managing to draw non-trivial conclusions such as $\RUD(\mathlst{b4})$.
    \textcolor{black}{The learned rules ($R_2$---$R_8$) in Fig.~\ref{sfig:motivation:setting:rules:learned} are used
     to define the new feature $f(x)$ in Fig.~\ref{sfig:motivation:example:tree4}}.
    }
  \label{fig:motivation:setting:rules}
\end{figure}

\begin{figure*}
\centering
\subfloat[]
{
  \begin{minipage}[c][1\width]{0.32\textwidth}
	   \centering
           \input{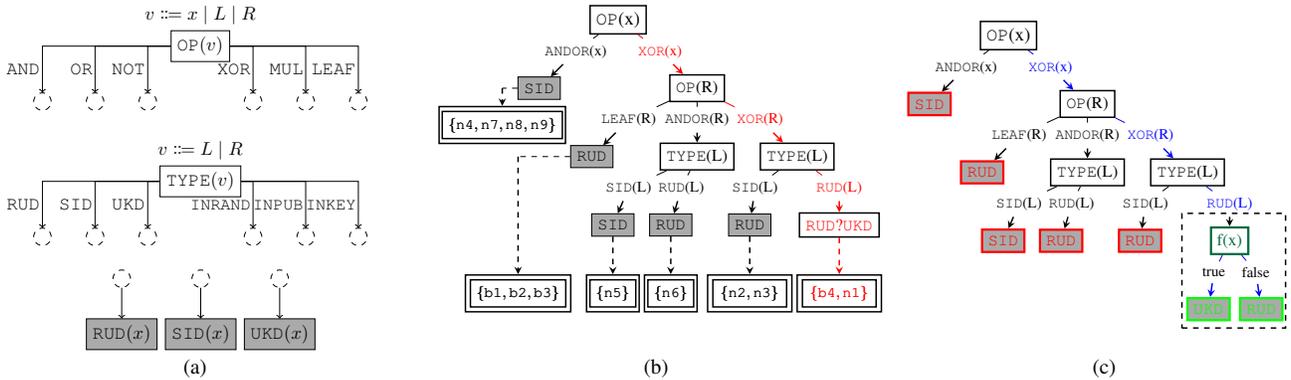}
           \vspace{-1.5ex}
	\vspace{-1cm}
	\end{minipage}
  \label{sfig:motivation:synthesis:features:predefined}
}
\hfill
\hspace{-1.25em}
\subfloat[] 
{
  \begin{minipage}[c][1\width]{0.32\textwidth}
    \centering
    \scalebox{0.7}{\begin{tikzpicture}[font=\footnotesize]

 \tikzstyle{arrow1}=[thick,->,>=stealth,darkmain]
 \tikzstyle{arrow2}=[thick,->,>=stealth,black]
 \tikzstyle{root}=[rectangle,draw, align=center, thick]
 \tikzstyle{leaf}=[rectangle, draw, align=center, fill={rgb:black,1;white,2}]

 \node at (0,0) [root] (root0) {\normalsize{$\OPOP$(x)}};
 
 \node at (-.8, -.6) [] (rootConditionL) {\color{black} $\ANDOROP$(x)};
 \node at (.8, -.6) [] (rootConditionR) {\color{red} $\XOROP$(x)};
 
  \draw (root0) -- (rootConditionL);
 \draw[red] (root0) -- (rootConditionR);
 
 \node at (-1.5, -1.3) [leaf] (branch11) {\small{$\SID$}};
 \node at (1.5, -1.3) [root] (branch12) {\small{$\OPOP$(R)}};
 
  \draw [arrow2] (rootConditionL) -- (branch11);
 \draw [arrow2, red] (rootConditionR) -- (branch12);
 
 \node at (1.5-1.3, -1.3-.6) [] (opCondition1)  {\color{black} $\LEAFOP$(R) };
 \node at (1.5, -1.3-.6) [] (opCondition2)  {\color{black} $\ANDOROP$(R)  };
 \node at (1.5+1.2, -1.3-.6) [] (opCondition3)  {\color{red} $\XOROP$(R) };
 
 \draw (branch12)  -- (opCondition1);
 \draw (branch12)  -- (opCondition2);
  \draw[red] (branch12)  -- (opCondition3);
 
 \node at (1.5-1.3-.7, -1.3-.6-.7) [leaf] (branch21) {\small{$\RUD$}};
 \node at (1.5, -1.3-.6-.7) [root] (branch22) {\small{$\TYPE$(L)}};
 \node at (1.5+1.2+.7, -1.3-.6-.7) [root] (branch23) {\small{$\TYPE$(L)}};
 
 \draw[arrow2] (opCondition1) -- (branch21);
 \draw[arrow2] (opCondition2) -- (branch22);
 \draw[arrow2, red] (opCondition3) -- (branch23);
 
   \node at (1.5-1.3, -1.3-.6-.7-.6) [] (typeL1condL) {\color{black} $\SID$(L)};
 \node at (1.5-.3, -1.3-.6-.7-.6) [] (typeL1condR) {\color{black}$\RUD$(L)};
 
  \draw (branch22) -- (typeL1condL);
 \draw (branch22) -- (typeL1condR);
 
 \node at (1.5-1.3-.2-.1, -1.3-.6-.7-.6-.7)  [leaf] (branch31) {\small{$\SID$}};
 \node at (1.5-.3-.2, -1.3-.6-.7-.6-.7)  [leaf] (branch32) { \small{$\RUD$}};
 
  \draw[arrow2] (typeL1condL) -- (branch31);
  \draw[arrow2] (typeL1condR) -- (branch32);
 
   \node at (1.5+1.2+.7-.8, -1.3-.6-.7-.6) [] (typeL2condL) {\color{black} $\SID$(L)};
 \node at (1.5+1.2+.7+.8, -1.3-.6-.7-.6) [] (typeL2condR) {\color{red}$\RUD$(L)};
 
  \draw (branch23) -- (typeL2condL);
 \draw[red] (branch23) -- (typeL2condR);
  
  \node at (1.5+1.2+.7-.8-.1, -1.3-.6-.7-.6-.7) [leaf] (branch33) {\small{$\RUD$}};
  \node at (1.5+1.2+.7+.8, -1.3-.6-.7-.6-.7) [root] (branch34) {\color{red} \small{$\RUD$?$\UKD$}};
  
  \draw[arrow2] (typeL2condL) -- (branch33);
  \draw[arrow2, red] (typeL2condR) -- (branch34);
 
  \node at (-1.5-.7, -1.3-.7) [root, minimum width=14mm, minimum height=5mm] (example1) {$\{ \mathlst{n4}, \mathlst{n7}, \mathlst{n8}, \mathlst{n9}\}$};
  \node at (-1.5-.7, -1.3-.7) [root, minimum width=24mm, minimum height=7mm] (example1Out) {$\{ \mathlst{n4}, \mathlst{n7}, \mathlst{n8}, \mathlst{n9}\}$};
    \draw[dashed] (branch11.west) -- (-1.5-.7, -1.3);
   \draw[arrow2, dashed] (-1.5-.7, -1.3) -- (example1Out);
   
   \node at (1.5-1.3-.2-.1, -1.3-.6-.7-2.6) [root, minimum width=7mm, minimum height=5mm] (example2) {$\{ \mathlst{n5}\}$}; 
   \node at (1.5-1.3-.2-.1, -1.3-.6-.7-2.6) [root, minimum width=10mm, minimum height=7mm] (example2Out) {$\{ \mathlst{n5}\}$};
   \draw[arrow2, dashed] (branch31.south) -- (example2Out.north);
   
   \node at (1.5-.3-.2, -1.3-.6-.7-2.6) [root, minimum width=7mm, minimum height=5mm] (example3) {$\{ \mathlst{n6}\}$}; 
   \node at (1.5-.3-.2, -1.3-.6-.7-2.6) [root, minimum width=10mm, minimum height=7mm] (example3Out) {$\{ \mathlst{n6}\}$}; 
   \draw[arrow2, dashed] (branch32.south) -- (example3Out.north);
   
    \node at (1.5+1.2+.7-.8-.1, -1.3-.6-.7-2.6) [root, minimum width=14mm, minimum height=5mm] (example4) {$\{ \mathlst{n2}, \mathlst{n3}\}$}; 
   \node at (1.5+1.2+.7-.8-.1, -1.3-.6-.7-2.6) [root, minimum width=16mm, minimum height=7mm] (example4Out) {$\{ \mathlst{n2}, \mathlst{n3}\}$}; \draw[arrow2, dashed] (branch33.south) -- (example4Out.north);
   
   \node at (1.5+1.2+.7+.8, -1.3-.6-.7-2.6) [root, minimum width=14mm, minimum height=5mm] (example5) {\color{red} $\{ \mathlst{b4}, \mathlst{n1}\}$}; 
   \node at (1.5+1.2+.7+.8, -1.3-.6-.7-2.6) [root, minimum width=16mm, minimum height=7mm] (example5Out) {\color{red} $\{ \mathlst{b4}, \mathlst{n1}\}$};
   \draw[arrow2, dashed, red] (branch34.south) -- (example5Out.north);

   \node at (-1.5-.4, -1.3-.6-.7-2.6)  [root, minimum width=14mm, minimum height=5mm] (example6) {$\{ \mathlst{b1}, \mathlst{b2}, \mathlst{b3}\}$}; 
   \node at (-1.5-.4, -1.3-.6-.7-2.6)  [root, minimum width=20mm, minimum height=7mm] (example6Out) {$\{ \mathlst{b1}, \mathlst{b2}, \mathlst{b3}\}$};
   \draw[dashed] (branch21.west) -- (-1.5-.4, -1.3-.6-.7);
   \draw[arrow2, dashed] (-1.5-.4, -1.3-.6-.7) -- (example6Out);
\end{tikzpicture}}
    \vspace{-1.5ex}
	\vspace{-1cm}
  \end{minipage} 
  \label{sfig:motivation:synthesis:features:tree1}
}
\hfill
\hspace{-1.75em}
\subfloat[] 
{
  \begin{minipage}[c][1\width]{0.32\textwidth}
	   \centering
	   \scalebox{0.7}{\begin{tikzpicture}[font=\footnotesize]

 \tikzstyle{arrow1}=[thick,->,>=stealth,darkmain]
 \tikzstyle{arrow2}=[thick,->,>=stealth,black]
 \tikzstyle{root}=[rectangle,draw, align=center, thick]
 \tikzstyle{rootGreen}=[rectangle,draw, align=center, color=cadmiumgreen, very thick]
 \tikzstyle{leaf}=[rectangle, draw, align=center, fill={rgb:black,1;white,2}]
 \tikzstyle{leafCE}=[rectangle, draw, align=center, color=red, very thick, fill={rgb:black,1;white,2}] \tikzstyle{leafCorrect}=[rectangle, draw, align=center, color=green, very thick, fill={rgb:black,1;white,2}]
 
 \tikzstyle{dashed_rectangle}=[rectangle,draw, align=center, dashed, thick]

 \node at (0,0) [root] (root0) {\normalsize{$\OPOP$(x)}};
 
 \node at (-.8, -.6) [] (rootConditionL) {\color{black} $\ANDOROP$(x)};
 \node at (.8, -.6) [] (rootConditionR) {\color{blue} $\XOROP$(x)};
 
  \draw (root0) -- (rootConditionL);
 \draw[blue] (root0) -- (rootConditionR);
 
 \node at (-1.5, -1.3) [leafCE]  (branch11) {\small{$\SID$}};
 \node at (1.5, -1.3) [root] (branch12) {\small{$\OPOP$(R)}};
 
  \draw [arrow2] (rootConditionL) -- (branch11);
 \draw [arrow2, blue] (rootConditionR) -- (branch12);
 
 \node at (1.5-1.3, -1.3-.6) [] (opCondition1)  {\color{black} $\LEAFOP$(R) };
 \node at (1.5, -1.3-.6) [] (opCondition2)  {\color{black} $\ANDOROP$(R)  };
 \node at (1.5+1.2, -1.3-.6) [] (opCondition3)  {\color{blue} $\XOROP$(R) };
 
 \draw (branch12)  -- (opCondition1);
 \draw (branch12)  -- (opCondition2);
  \draw[blue] (branch12)  -- (opCondition3);
 
 \node at (1.5-1.3-.7, -1.3-.6-.7) [leafCE] (branch21) {\small{$\RUD$}};
 \node at (1.5, -1.3-.6-.7) [root] (branch22) {\small{$\TYPE$(L)}};
 \node at (1.5+1.2+.7, -1.3-.6-.7) [root] (branch23) {\small{$\TYPE$(L)}};
 
 \draw[arrow2] (opCondition1) -- (branch21);
 \draw[arrow2] (opCondition2) -- (branch22);
 \draw[arrow2, blue] (opCondition3) -- (branch23);
 
   \node at (1.5-1.3, -1.3-.6-.7-.6) [] (typeL1condL) {\color{black} $\SID$(L)};
 \node at (1.5-.3, -1.3-.6-.7-.6) [] (typeL1condR) {\color{black}$\RUD$(L)};
 
  \draw (branch22) -- (typeL1condL);
 \draw (branch22) -- (typeL1condR);
 
 \node at (1.5-1.3-.2-.1, -1.3-.6-.7-.6-.7)  [leafCE] (branch31) {\small{$\SID$}};
 \node at (1.5-.3-.2, -1.3-.6-.7-.6-.7)  [leafCE] (branch32) { \small{$\RUD$}};
 
  \draw[arrow2] (typeL1condL) -- (branch31);
  \draw[arrow2] (typeL1condR) -- (branch32);
 
   \node at (1.5+1.2+.7-.8, -1.3-.6-.7-.6) [] (typeL2condL) {\color{black} $\SID$(L)};
 \node at (1.5+1.2+.7+.8, -1.3-.6-.7-.6) [] (typeL2condR) {\color{blue}$\RUD$(L)};
 
  \draw (branch23) -- (typeL2condL);
 \draw[blue] (branch23) -- (typeL2condR);
  
  \node at (1.5+1.2+.7-.8-.1, -1.3-.6-.7-.6-.7) [leafCE] (branch33) {\small{$\RUD$}};
  \node at (1.5+1.2+.7+.8, -1.3-.6-.7-.6-.7)  [rootGreen]  (branch34) { \small{f(x)}};
  
  \draw[arrow2] (typeL2condL) -- (branch33);
  \draw[arrow2] (typeL2condR) -- (branch34);
 
  \node at (1.5+1.2+.7+.8-.3, -1.3-.6-.7-.6-.8-.5) [] (fCondL) {\color{black} true};
   \node at (1.5+1.2+.7+.8+.5, -1.3-.6-.7-.6-.8-.5) [] (fCondR) {\color{black} false}; 
   
   \draw[blue] (branch34) -- (fCondL);
   \draw[blue] (branch34) -- (fCondR);
   
   \node at (1.5+1.2+.7+.8-.5+.1, -1.3-.6-.7-.6-.8-.6-.6) [leafCorrect] (branch41) {\small{$\UKD$}};
   \node at (1.5+1.2+.7+.8-.5+1.1, -1.3-.6-.7-.6-.8-.6-.6) [leafCorrect] (branch42) {\small{$\RUD$}};
  
   \draw[arrow2, blue] (fCondL) -- (branch41);
   \draw[arrow2, blue] (fCondR) -- (branch42);
    \node[dashed_rectangle, minimum width=5.6em, minimum height=6.2em] at ([yshift=-.5em, xshift=.1em]fCondL.north east) {};

\end{tikzpicture}}
	   \vspace{-1.5ex}
	\vspace{-1cm}
	\end{minipage}
  \label{sfig:motivation:synthesis:features:tree2}
}
\caption{
%
%
  The classifier of
  Fig.~\ref{sfig:motivation:synthesis:features:tree1} is learned
  only using the features in
  Fig.~\ref{sfig:motivation:synthesis:features:predefined}.  Because
  of the limited expressive power of these features, the learned
  analysis necessarily misclassifies either \lstinline|b4| or
  \lstinline|n1|. 
  Fig.~\ref{sfig:motivation:synthesis:features:tree2}
  denotes the candidate analyzer produced after one round of
  feature synthesis. The blue paths corresponds to the rule $\RUD(x) 
  \gets \XOROP(x) \land \XOROP(R) \land \RUD(L) \land \lnot f(x) 
  \land LC(x, L) \land RC(x, R)$. Unfortunately, even though 
  this analysis (Fig.~\ref{sfig:motivation:synthesis:features:tree2}) achieves 100\%
  training accuracy, the leaf nodes highlighted in red correspond to
  unsound predictions.  
  }
\label{fig:motivation:synthesis:features}
\vspace{-1em}
\end{figure*}

\subsection{Problem Setting}
\label{sub:motivation:setting}

Given such a masked program, users want to determine if they succeed
in eliminating side-channel vulnerabilities: in particular, if each
intermediate result is uniformly distributed ($\RUD$) or at least
independent of the sensitive inputs ($\SID$). The desired static
analysis thus associates each variable $x$ (e.g., $n_1$) with the
elements of a three-level abstract domain, $\RUD$, $\SID$ or $\UKD$,
indicating that $x$ is uniformly distributed ($\RUD$), secret
independent ($\SID$), or unknown ($\UKD$) and therefore potentially
vulnerable.

The decision tree in Fig.~\ref{sfig:motivation:example:tree4}
represents the desired static analyzer, which accurately classifies
most variables of the training corpus, and is also sound when applied
to new programs. Given variable $x$, the decision tree leverages the
features of $x$---such as the operator type of $x$ ($\OPOP$(x) :=
$\ANDOROP$(x)$|\XOROP$(x)) and the types of $x$'s operands
(e.g. $\TYPE$(L), $\TYPE$(R))---and maps $x$ to its corresponding
distribution type.  The white nodes of
Fig.~\ref{sfig:motivation:example:tree4} represent pre-defined
features, while the grey nodes represent output classes (associated
types).  Each path from the root to leaf node corresponds to one
analysis rule.  The set of pre-defined features used in this work is
shown in Fig.~\ref{sfig:motivation:synthesis:features:predefined}.

Designing side-channel analyses has been the focus of intense
research, see for example~\cite{wang2019mitigating, zhang2018sc,
barthe2018maskverif, wang2017security, barthe2016strong,
barthe2015verified,chen2017precise,tizpaz2019quantitative}.
Unfortunately, it requires expert knowledge in both computer security
and program analysis, and invariably involves delicate trade-offs
between accuracy and scalability. Our goal in this work is to assist
the analysis designer in automating the development.  This problem has
also been the subject of exciting research~\cite{david2015using,
bielik2017learning}; however, these approaches typically either
require computationally intensive deductive synthesis or cannot
guarantee soundness and thus produce errors in both directions,
including false alarms and missed bugs.

In contrast, \GPS{} combines inductive synthesis from user annotations
with logical entailment checking against a more comprehensive,
known-to-be-correct set of proof rules that form the knowledge base
($KB$). It takes as input training programs like the one in
Fig.~\ref{sfig:motivation:example:program}, where the labels
correspond to the types of program variables ($\RUD$/$\SID$/$\UKD$ for
intermediate results and $\INRANDOP$/$\INPUBOP$/$\INKEYOP$ for
inputs). The users are free to annotate as many or as few of these
types as they wish: this affects only the precision of the learned
analyzer and not its soundness.  Second, \GPS{} also takes as input
the knowledge base $KB$, consisting of proof rules that describe
axioms of propositional logic (Fig.~\ref{boolean}) and properties of
the distribution types (Fig.~\ref{fig:kbRule}). In return, \GPS{}
produces as output a set of Datalog rules which simultaneously
achieves high accuracy on the training data and provably sound with
respect to $KB$.

The proof rules for $KB$ 
were collected from published papers on masking
countermeasures~\cite{zhang2018sc, wang2019mitigating,
barthe2015verified}.  We emphasize that $KB$ is not necessarily an
executable static analyzer since repeated application of these proof
rules need not necessarily reach a fixpoint and terminate in finite
time; 
Furthermore, even in cases
where it does terminate, $KB$ may be computationally expensive and
infeasible for application to large programs.

As a concrete example, we compare excerpts of the rules learned by
\GPS{} in Fig.~\ref{sfig:motivation:setting:rules:learned} 
to the corresponding rules from SCInfer~\cite{zhang2018sc}---a
human-written analysis---in
Fig.~\ref{sfig:motivation:setting:rules:scinfer}.
%
$\LCOP(x,L)$ and $\RCOP(x, R)$ arises in both versions, indicating
that $L$ and $R$ are the left and right operands of $x$
respectively.  Specifically, in
Fig.~\ref{sfig:motivation:setting:rules:scinfer}, $\SUPPOP(x, y)$
indicates that $y$ is used in the computation of $x$ syntactically
while $ \DOMOP(x, y)$ denotes that random variable $y$ is used in the
computation of $x$ semantically.
Observe the computationally expensive set operations in the
human-written version to the simpler rules learned by \GPS{} without
loss of soundness or perceptible loss in accuracy. These points are
also borne out in our experiments in Table~\ref{tbl:FSE19}, where
SCInfer takes $>$45 minutes on some Keccak benchmarks, while our
learned analysis takes $<$5 seconds.

\GPS{} consists of two phases: First, it learns a set of
type-inference rules---alternatively represented either as Datalog
programs or as decision trees---that are consistent with the training
data. Second, it proves these rules against the knowledge base.
In the next two subsections, we will explain the  learning and
soundness proving processes respectively.

\subsection{Feature Synthesis and Rule Learning}
\label{sub:motivation:synthesis}


The learned analyzer associates each node $x$ of a program's abstract
syntax tree (AST) with an element of the distribution type $\{
\UKD(x), \SID(x), \RUD(x) \}$. 
We may therefore interpret the analyzer
as a decision tree that, by considering various features of an AST
node, maps it to a type. With a pre-defined set of features, such as
those shown in
Fig.~\ref{sfig:motivation:synthesis:features:predefined}, analyzers of
this form can be learned with classical decision tree learning (DTL)
algorithms.  Fig.~\ref{sfig:motivation:synthesis:features:tree1} shows
such an analyzer, learned from the labeled program of
Fig.~\ref{sfig:motivation:example:program}.

Unfortunately, the pre-defined features may not be strong enough
to distinguish between nodes with different training labels, e.g.,
\lstinline|b4| and \lstinline|n1| from the training program, which
have distinct labels $\RUD(\mathlst{b4})$ and $\UKD(\mathlst{n1})$,
but after being sifted into the node highlighted in red in
Fig.~\ref{sfig:motivation:synthesis:features:tree1}, cannot be
separated by any of the features from
Fig.~\ref{sfig:motivation:synthesis:features:predefined}. 
To ensure soundness, the learner would be forced to conservatively
assign the label $\UKD(x)$, which sacrifices the accuracy. 

\GPS{} thus includes a \emph{feature} synthesis engine, triggered whenever
the learner fails to distinguish between two differently labeled
variables. In tandem with recursive feature synthesis,
\GPS{} overcomes the limited expressiveness of DTL by enriching
syntax space to capture more desired patterns.
Observe that paths of a decision tree can
be represented as Datalog rules, e.g., the red path
in
Fig.~\ref{sfig:motivation:synthesis:features:tree1} is equivalent to
{
\[
  \UKD(x) \gets \XOROP(x) \land \XOROP(R) \land \RUD(L) \land \LCOP(x, L) \land \RCOP(x, R).
\]
}%
Viewing this in Datalog also allows us to conveniently describe
\emph{recursive} features, and reduce \emph{feature synthesis} to an
instance of syntax-guided synthesis (SyGuS).
Syntactically, the analysis rules corresponding to new features
are instances of a pre-defined set of \emph{meta-rules}, and the
target specification is to produce a Datalog program for a relation
$f(x)$ that has strictly positive information gain for the variables
under consideration (see Section~\ref{sec:learner} for details).

In our running example, the synthesizer produces the feature $f(x)$
shown in Fig.~\ref{sfig:motivation:setting:rules:learned}, which
intuitively indicates that some random input $r$ is used to compute
both operands of $x$.
With this new feature, the learner can distinguish between
\lstinline|b4| and \lstinline|n1|, and produce the rule shown in
Fig.~\ref{sfig:motivation:synthesis:features:tree2}, which correctly
classifies all variables of the training program. 
Observe that the rules defining $f(x)$ in
Fig.~\ref{sfig:motivation:setting:rules:learned} involve a newly
introduced predicate $g(x, r)$ and recursive structure that can
classify variables based on arbitrarily deep properties of the
abstract syntax tree. 
  
\begin{figure}
  \centering
  \scalebox{0.7}{
    \begin{tabular}{lcccccc}
      \toprule
      & \tikz{\node [root2] {$\OPOP(x)$}} & \tikz{\node [root2] {$\OPOP(L)$}} & \tikz{\node [root2] {$\OPOP(R)$}}
      & \tikz{\node [root2] {$\TYPE(L)$}} & \tikz{\node [root2] {$\TYPE(R)$}} & $f(x)$ \\
      \midrule
      $CE_1$ & $\ANDOROP$ & -1 & -1         & -1     & -1 & -1 \\
      $CE_2$ & $\XOROP$   & -1 & $\LEAFOP$  & -1     & -1 & -1 \\
      $CE_3$ & $\XOROP$   & -1 & $\XOROP$   & $\SID$ & -1 & -1 \\
      $CE_4$ & $\XOROP$   & -1 & $\ANDOROP$ & $\RUD$ & -1 & -1 \\
      $CE_5$ & $\XOROP$   & -1 & $\ANDOROP$ & $\SID$ & -1 & -1 \\
      \bottomrule
    \end{tabular}
  }
  %
  \caption{Abstract counter-examples produced during the soundness
    verification of the candidate analyzer shown in Fig.~\ref{sfig:motivation:synthesis:features:tree2}.
    }
    \label{sfig:motivation:proof:cex}
    \vspace{-1em}
\end{figure}

\setlength{\textfloatsep}{10pt plus 1.0pt minus 2.0pt}
\begin{figure}
  \centering
  \scalebox{0.8}{\begin{tikzpicture}[font=\footnotesize]

\tikzstyle{arrow1}=[thick,->,>=stealth,darkmain]
 \tikzstyle{arrow2}=[thick,->,>=stealth,black]
 \tikzstyle{root}=[rectangle,draw, align=center,thick]
 \tikzstyle{rootGreen}=[rectangle,draw, align=center, color=cadmiumgreen, very thick]
 \tikzstyle{leaf}=[rectangle, draw, align=center, fill={rgb:black,1;white,2}]
 \tikzstyle{leafCE}=[rectangle, draw, align=center, color=red, very thick, fill={rgb:black,1;white,2}] \tikzstyle{leafCorrect}=[rectangle, draw, align=center, color=green, very thick, fill={rgb:black,1;white,2}]
 
 \node at (0,0) [root] (root0) {\normalsize{$\TYPE$(L)}};
 
  \node at (-1.2, -.6) [] (typeCondition1)  {\color{black} $\RUD$(L) };
 \node at (0, -.6) [] (typeCondition2)  {\color{black} $\UKD$(L) };
 \node at (1.1, -.6) [] (typeCondition3)  {\color{black} $\SID$(L) };
 
 \draw (root0) -- (typeCondition1);
 \draw (root0) -- (typeCondition2);
 \draw (root0) -- (typeCondition3);
 
 \node at (-1.2-.7, -.6-.7) [root] (branch11) {\small{$\OPOP$(x)}};
  \node at (0, -.6-.7) [leafCE] (branch12) {\small{$\RUD$}};
\node at (1.1+.7, -.6-.7) [root] (branch13) {\small{$\OPOP$(R)}};

\draw[arrow2] (typeCondition1) -- (branch11);
\draw[arrow2] (typeCondition2) -- (branch12);
\draw[arrow2] (typeCondition3) -- (branch13);

\node at (-1.2-.7-.8, -.6-.7-.6) [] (opXL) {\color{black} $\XOROP$(x)};
\node at (-1.2-.7+.8, -.6-.7-.6) [] (opXR) {\color{black} $\ANDOROP$(x)};

\draw (branch11) -- (opXL);
\draw (branch11) -- (opXR);

\node at (-1.2-.7-.8-.4, -.6-.7-.6-.7) [root] (branch21) {\small{f(x)}};
\node at (-1.2-.7+.8+.2, -.6-.7-.6-.7) [leafCE] (branch22) {\small{$\SID$}};

\draw[arrow2] (opXL) -- (branch21);
\draw[arrow2] (opXR) -- (branch22);

\node at (1.1+.7-.8, -.6-.7-.6) [] (opRL) {\color{black} $\ANDOROP$(R)};
\node at (1.1+.7+.8, -.6-.7-.6) [] (opRR) {\color{black} $\XOROP$(R)};

\draw (branch13) -- (opRL);
\draw (branch13) -- (opRR);

\node at (1.1+.7-.8-.4, -.6-.7-.6-.7) [leafCE] (branch23) {\small{$\SID$}};
\node at (1.1+.7+.8+.2, -.6-.7-.6-.7) [leafCE] (branch24) {\small{$\RUD$}};

\draw[arrow2] (opRL) -- (branch23);
\draw[arrow2] (opRR) -- (branch24);

\node at (-1.2-.7-.8-.4-.8, -.6-.7-.6-.7-.6) [] (fxL) {\color{black} true};
\node at (-1.2-.7-.8-.4+.8, -.6-.7-.6-.7-.6) [] (fxR) {\color{black} false};

\draw (branch21) -- (fxL);
\draw (branch21) -- (fxR);

\node at (-1.2-.7-.8-.4-.8-.4, -.6-.7-.6-.7-.6-.7) [leafCorrect] (branch31) {\small{$\UKD$}};
\node at (-1.2-.7-.8-.4+.8+.2, -.6-.7-.6-.7-.6-.7) [leafCorrect] (branch32) {\small{$\RUD$}};

\draw[arrow2] (fxL) -- (branch31);
\draw[arrow2](fxR) -- (branch32);
\end{tikzpicture}}
  \caption{Candidate analysis learned after one round of feedback from the soundness verifier. The leaves shown in green and red correspond to sound and unsound analysis rules respectively.}
  \label{fig:motivation:proof:tree3}
  \vspace{-1em}
\end{figure}
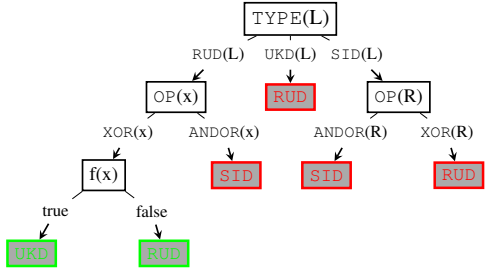

  %
  %

\subsection{Proving Soundness of Learned Analysis Rules}
\label{sub:motivation:proof}

While the learned analysis rules are \emph{correct by construction}
for the training examples, they may still be unsound when applied to
unseen programs. We observe this, for example, in the leaves
highlighted in red in
Fig.~\ref{sfig:motivation:synthesis:features:tree2}. 
Thus, \GPS{} tries to confirm their soundness against the
domain-specific knowledge base $KB$. 
In the context of our running example---confirming soundness means
proving that every variable $x$ that is assigned type $\RUD(x)$
(resp. $\SID(x)$) by the learned analysis rule $\alpha$ is also
certified $\RUD(x)$ (resp. $\SID(x)$) by $KB$.

We formalize the soundness proof as a Datalog query containment
problem,
%
%
and propose an algorithm based on bounded unrolling and
$k$-induction to check it.



When applied to the candidate analysis of
Fig.~\ref{sfig:motivation:synthesis:features:tree2}, the check results
in the five counter-examples $CE_1, \dots, CE_5$ with distribution
type $\UKD(CE_i)$ shown in Fig.~\ref{sfig:motivation:proof:cex}.  Each
counter-example indicates the unsoundness of one path from the root of
the decision tree to a classification node. These are \emph{abstract}
counter-examples in that they contain \emph{missing features} and
consequently do not define concrete ASTs. Thus, each of
these \emph{abstract} counter-examples is a set of feature valuations
$\pi = \{ f_1 \mapsto v_1, f_2 \mapsto v_2, \dots, f_k \mapsto v_k \}$
that the current candidate analysis misclassifies, and feeding them
back to the learner can prohibit subsequent candidate analyses from
classifying variables that satisfy $\pi$.

With these new constraints from abstract counter-examples, the learner
learns the new candidate analysis shown in
Fig.~\ref{fig:motivation:proof:tree3}. This new candidate analysis
still has four unsound candidate rules, which results in additional
abstract counter-examples when it is subjected to the soundness
check. We repeat this back-and-forth between the rule learner and the
soundness prover: after 11 iterations and after processing 27
counter-examples in all, \GPS{} learns the rules initially presented
in Fig.~\ref{sfig:motivation:example:tree4}, all of which have been
certified to be sound.

\subsection{Overall Architecture of the \GPS{} System}

We summarize the architecture of \GPS{} in Fig.~\ref{overview}.  The
learner repeatedly applies DTL and SyGuS to learn candidate analyses
that correctly classify training samples and are consistent with
newly-added abstract counter-examples. Next, the prover checks the
soundness of the learned analysis. Each subsequent counter-example is
fed back to the learner which restarts the rule learning process on
augmented dataset, until either all synthesized rules are sound or a
time bound is exhausted.


\section{Learning the Inference Rules} 
\label{sec:learner}

We formally describe the analysis rule learner in
Algorithm~\ref{alg:learner}. The input consists of a set of labeled
examples, $\mathcal{E}$, and a set of pre-defined features,
$\mathcal{F}$.  The output $\mathcal{T}$ is a set of type-inference
rules consistent with training examples.  Each training example
$(x, \TYPE(x)) \in \mathcal{E}$ consists of an AST node $x$ in a
program and its distribution type $\TYPE(x)$.

At the top level, the learner uses the standard \emph{decision tree
  learning (DTL)} algorithm~\cite{rokach2005top} as the baseline.
However, if it finds that the current set $\mathcal{F}$ of
classification features is insufficient, it invokes a
\emph{syntax-guided synthesis (SyGuS)} algorithm to synthesize a new
feature $f$ with strictly positive information gain to augment
$\mathcal{F}$.  This allows the learner to combine the efficiency of
techniques that learn decision trees with the expressiveness of syntax
guided synthesis; similar ideas have been fruitfully used in other
applications of program synthesis, see for
example~\cite{alur2017scaling}.

While the top-level classifier (e.g.,
Fig.~\ref{sfig:motivation:example:tree4},
\ref{sfig:motivation:synthesis:features:tree1},
\ref{sfig:motivation:synthesis:features:tree2} and
\ref{fig:motivation:proof:tree3}) has a bounded number of decision
points, the synthesized features (e.g.,
Fig.~\ref{sfig:motivation:setting:rules:learned}) may be
recursive. Furthermore, the newly synthesized features $f$ are
inducted as first-class citizens of $\mathcal{F}$, and can
subsequently be used at any level of the decision tree (see, for
example Fig.~\ref{sfig:motivation:example:tree4}
and~\ref{fig:motivation:proof:tree3}).
Next, we present the DTL and SyGuS subroutines respectively.

\begin{algorithm}[t]
\caption{$\DTL(\mathcal{E}, \mathcal{F})$ --- Decision Tree Learning.}
\label{alg:learner}
{\footnotesize
\begin{algorithmic}[1]
\Input Examples, $\mathcal{E} = \{ (x_1, \TYPE(x_1)), \dots, (x_n, \TYPE(x_n)) \}$
\Input Pre-defined features, $\mathcal{F} = \{ f_1, f_2, \dots, f_k \}$
\Output Classifier $\mathcal{T}$ which is consistent with provided examples

\If {all examples $(x, \TYPE(x)) \in \mathcal{E}$ have the same label $\TYPE(x) = t$}
  \State \textbf{return} $\mathcal{T} = \operatorname{LeafNode}(t)$
\EndIf

  \If {$\not \exists f \in \mathcal{F}$ such that $H(\mathcal{E} \mid f) < H(\mathcal{E})$}
    \State $\mathcal{F} \coloneqq \mathcal{F} \union \FeatureSyn(\mathcal{E})$
  \EndIf

  \State $\mathcal{T} = \operatorname{DecisionNode}(f^*)$, where $f^* = \argmin_{f \in \mathcal{F}} H(\mathcal{E} \mid f)$
  \For {valuation $i$ of feature $f*$}
    \State $\mathcal{T}_i = \DTL(\mathcal{E} \rvert_{f^*(x) = i}, \mathcal{F} \setminus \{ f^* \})$
    \State Add edge from $\mathcal{T}$ to $\mathcal{T}_i$ with label $f^*(x) = i$
  \EndFor

\State \textbf{return} $\mathcal{T}$
\end{algorithmic}


}
\end{algorithm}

\subsection{The Decision Tree Learning Algorithm}
\label{sub:learner:dtl}



Recall that our pre-defined features
(Fig.~\ref{sfig:motivation:synthesis:features:predefined}) include
properties of the AST node, such as $\OPOP(x)$, and properties
referring its left and right children, such as $\OPOP(L) \land
\LCOP(x, L)$. The choice requires some care: having very few features
will cause the learning algorithm to fail, while having too many
features will increase the risk of overfitting. Our synergistic
combination of DTL with SyGuS-based on-demand feature synthesis can be
seen as a compromise between these extremes.




$\DTL(\mathcal{E}, \mathcal{F})$ is an \emph{entropy-guided} greedy
learner~\cite{rokach2005top}, where the entropy and conditional
entropy of a set (defined below) are used to measure the diversity of
its labels:

{\footnotesize
\vspace{-1em}
$$
\begin{array}{l}
  H(\mathcal{E})  = -\sum_{t \in \TYPE} \Pr(\TYPE(x) = t) ~ \log(\Pr(\TYPE(x) = t)) \\\\
  H(\mathcal{E} \mid f)  = \sum_{i \in \Range(f)} H(\mathcal{E} \mid f(x) = i)
\end{array}
$$
}%

\noindent
%
Algorithm~\ref{alg:learner} thus divides the set of training examples
$\mathcal{E}$ using the feature $f = f^*$ that minimizes the conditional entropy $H(\mathcal{E} \mid f)$ (Lines~7--12), and recursively invokes the learning algorithm on each subset, $\DTL(\mathcal{E} \rvert_{f^*(x) = i},\mathcal{F} \setminus \{ f^* \})$.

Observe that $H(\mathcal{E}) = 0$ if $\Pr(\TYPE(x)=t)=100\%$, meaning
\emph{purity} or all examples $x \in \mathcal{E}$ share the same type
$\TYPE(x) = t$. The difference between $H(\mathcal{E})$ and
$H(\mathcal{E} \mid f)$ is also referred to as the \emph{information
  gain}. If the learner cannot find a feature with strictly positive
information gain (Line~4), it will invoke the feature synthesis
algorithm on Line~5.

\subsection{The On-Demand Feature Synthesis Algorithm}
\label{sub:learner:sygus}

\begin{algorithm}[t]
{\footnotesize
\begin{algorithmic}[1]
\Input Examples, $\mathcal{E} = \{ (x_1, \TYPE(x_1)), \dots, (x_n, \TYPE(x_n)) \}$
\Output Feature $f$ with positive information gain, or $\bot$ to indicate failure

\State Let $\mathcal{S}$ be the meta-rules defined in Figure~\ref{fig:learner:sygus:metarules}, i.e. the
  \emph{hypothesis space}

\ForEach {relation schema $r$ defined in $\mathcal{S}$}
  \ForEach {subset $S$ of meta-rules corresponding to the schema}
    \ForEach {choice $p_{in}$, $q_{in}$, and nested relational predicates}
      \State Let $f$ be the corresponding instantiation of the meta-rules in $S$
      \If {$h(\mathcal{E} \mid f) \lneq h(\mathcal{E})$}
        \State \textbf{return} $f$
      \EndIf
    \EndFor
  \EndFor
\EndFor

\State \textbf{return} $\bot$
\end{algorithmic}


}
\caption{$\FeatureSyn$($\mathcal{E}$).}
\label{alg:learner:sygus}
\end{algorithm}

\setlength{\textfloatsep}{10pt plus 1.0pt minus 2.0pt}
\begin{figure}
{\footnotesize
  \[
    R_f = \left\{
            \begin{array}{rcl}
              f(x) & \gets & p_{in}(x), \\
              f(x) & \gets & q_{in}(x, y), \\
              f(x) & \gets & p_{in}(x, y) \land q_{in}(x, y), \\
              f(x) & \gets & q_{in}(x, y) \land f(y), \\
              f(x) & \gets & q_{in}(x, y) \land p_{in}(x) \land f(y)
            \end{array}
          \right\}
  \]

  \[
    R_g = \left\{
            \begin{array}{rcl}
              g(x, y) & \gets & q_{in}(x, y), \\
              g(x, y) & \gets & p_{in}(x) \land q_{in}(x, y), \\
              g(x, y) & \gets & q_{in}(x, z) \land g(z, y), \\
              g(x, y) & \gets & q_{in}(x, z) \land p_{in}(x) \land g(z, y)
            \end{array}
          \right\}
  \]

\[
  R_h = \left\{
          \begin{array}{rcl}
            h(x) & \gets & f(x) \land p_{in}(x) \land q_{in}(x, y), \\
            h(x) & \gets & g(x, y) \land p_{in}(x) \land q_{in}(x, y), \\
            h(x) & \gets & f(x) \land g(x, y) \land p_{in}(x) \land q_{in}(x, y)
          \end{array}
        \right\}
\]

\[
  \begin{array}{rcl}
    p_{in}(x)    & \Coloneqq & \ANDOP(x) \mid \OROP(x) \mid \NOTOP(x) \mid \XOROP(x) \mid \MULOP(x) \mid \LEAFOP(x) \\
                 & \mid      & \INRANDOP(x) \mid \INKEYOP(x) \mid \INPUBOP(x) \\
                 & \mid      & p_{in} \land p_{in} \mid p_{in} \lor p_{in} \mid \lnot p_{in} \\
    q_{in}(x, y) & \Coloneqq & \LCOP(x, y) \mid \RCOP(x, y) \mid x = y \\
                 & \mid      & q_{in}(x, y) \land q_{in}(x, y) \mid q_{in}(x, y) \lor q_{in}(x, y) \\
                 & \mid      & \lnot q_{in}(x, y)
  \end{array}
\]
}
\vspace{-1.5em}
\caption{Syntax of the DSL for synthesizing new features.}
\label{fig:learner:sygus:metarules}
\end{figure}

%
We represent newly synthesized features as Datalog programs. Datalog
is an increasingly popular medium to express static
analyses~\cite{Reps1995, bddbddb, jordan2016souffle, doop}, and its
recursive nature enables the newly learned features to represent
arbitrarily deep properties of AST nodes.
%
A Datalog rule is a constraint of the form
\begin{alignat}{1}
  h(\bm{x}) & \gets b_1(\bm{y}_1) \land b_2(\bm{y}_2) \land \dots \land b_n(\bm{y}_n), \label{eq:learner:sygus:rule}
\end{alignat}
where $h$, $b_1$\dots$b_n$ are relations with pre-specified arities and schemas, and where $\bm{x}$,
$\bm{y}_1$\dots$\bm{y}_n$ are vectors of typed variables. Each rule  
can be interpreted as a logical implication: if
$b_1$\dots$b_n$ are true, then so is $h$.
The semantics of a Datalog program is defined as the \emph{least fixed-point} of rule application~\cite{Alice}: the solver starts with
empty output relations, and repeatedly derives new output tuples until no new tuples can be derived.
%


%

\begin{figure*}
{\footnotesize
\begin{mathpar}
\inferrule*[right=($B_1$)] {b \vee \neg b \equiv true}{} \and
\inferrule*[right=($B_2$)] {b \wedge \neg b \equiv false}{} \and
\inferrule*[right=($B_3$)] {\neg\neg b \equiv b}{} \and
\inferrule*[right=($B_{4}$)] {\neg a \vee \neg b \equiv \neg(a \wedge b)}{}
\end{mathpar}
\begin{mathpar}
\inferrule*[right=($B_{5}$)] {\neg a \wedge \neg b \equiv \neg (a \vee b )}{} \and
\inferrule*[right=($B_6$)]{b \vee false \equiv b}{} \and
\inferrule*[right=($B_7$)]{b \vee true \equiv true}{} \and
\inferrule*[right=($B_8$)]{b \wedge true \equiv b}{} 
\end{mathpar}
\begin{mathpar}
\inferrule*[right=($B_9$)]{b \wedge b \equiv b}{}   \and
\inferrule*[right=($B_a$)]{b \wedge false \equiv false}{} \and
\inferrule*[right=($B_b$)]{b \vee b \equiv b}{}  \and
\inferrule*[right=($B_c$)]{a \wedge (a \vee b) \equiv a}{} 
\end{mathpar}
\begin{mathpar}
\inferrule*[right=($B_d$)]{a \vee(a \wedge b) \equiv a}{} \and
\inferrule*[right=($B_e$)]{a \oplus b \equiv (a \wedge \neg b) \vee ( \neg a \wedge b )}{} \and
\inferrule*[right=($B_f$)]{(a \vee b) \vee c \equiv a \vee c \vee b}{} 
\end{mathpar}
\begin{mathpar}
\inferrule*[right=($B_{10}$)]{(a \wedge b) \wedge c \equiv a \wedge c \wedge b}{} \and 
\inferrule*[right=($B_{11}$)]{a \vee (b \vee c) \equiv a \vee b \vee c}{} \and
\inferrule*[right=($B_{12}$)]{a \wedge (b \wedge c) \equiv a \wedge b \wedge c}{}
\end{mathpar}
}
\vspace{-1.5em}
\caption{Proof rules for propositional logic, to simplify the logic formula and deduce Boolean constants ($true$ and $false$).}
\vspace{-1em}
\label{boolean}
\end{figure*}
Program synthesis commonly restricts the space of 
target concepts and biases the search to speed up
computation and improve generalization. One form of 
bias has been to constrain the syntax: 
this has been formalized as the 
SyGuS problem~\cite{alur2013syntax} and as
meta-rules in inductive logic programming~\cite{Muggleton:MIL,
si2018syntax}. A meta-rule is construct of this form
%
\begin{alignat}{1}
  X_h(\bm{x}) & \gets X_1(\bm{y}_1) \land X_2(\bm{y}_2) \land \dots \land X_n(\bm{y}_n) \label{eq:learner:sygus:metarule}
\end{alignat}
Here, $X_h$, $X_1$, $X_2$, \dots, $X_n$ are \emph{relation variables} whose instantiation yields a concrete rule. 
Fig.~\ref{fig:learner:sygus:metarules} shows the meta-rules
used in our work. 
For example, instantiating the meta-rule $f(x) \gets
q_{in}(x, y) \land p_{in}(x) \land f(y)$ with $q_{in}(x, y) = \RCOP(x,
y)$ and $p_{in}(x) = \XOROP(x)$ yields  $f(x) \gets
\RCOP(x, y) \land \XOROP(x) \land f(y)$.
There are three variations of the final target relation schema,
$f(x)$, $g(x, y)$ and $h(x)$, where $x$ and $y$ denote AST nodes.

We formalize the synthesis problem as that of choosing a relation $R
\in \{ f(x), g(x, y), h(x) \}$ and finding a subset $P_D$ of its
instantiated meta-rules from Fig.~\ref{fig:learner:sygus:metarules}
such that the resulting Datalog program $P_D$ has strictly positive information
gain on the provided training examples $\mathcal{E}$.



\ignore{
In the subsequent soundness verification pass, our goal will be to determine if the learned analysis rules, represented as a Datalog program, is valid for all possible (unseen) programs.  While this problem is undecidable for general Datalog programs~\cite{calvanese2005decidable}, it does admit decision
procedures for certain restricted classes of programs, notably the UC2RPQ subset (Union of Conjunctive 2-Way Regular
Path Queries; see~\cite{calvanese2005decidable, calvanese2003reasoning, barcelo2014does}). We therefore carefully choose
the meta-rules in Fig.~\ref{fig:learner:sygus:metarules} to only produce concrete programs from the UC2RPQ fragment.
}






Algorithm~\ref{alg:learner:sygus} shows the procedure, which
repeatedly instantiates the meta-rules from
Fig.~\ref{fig:learner:sygus:metarules} and computes their
information gain. It successfully terminates when it discovers a
feature that can improve classification. Otherwise, it
returns failure (upon timeout) and invokes
$\DTL(\mathcal{E}, \mathcal{F})$ to conservatively classify
the decision tree node as being of type $\UKD$.



\begin{example}
\label{ex:learner:sygus}
Given $\mathcal{E} = \{ (\mathlst{b4}, \RUD), (\mathlst{n1}, \UKD) \}$
shown in Fig.~\ref{sfig:motivation:example:program}, the synthesizer
may alternatively learn the rules in
Equations~\ref{eq:learner:sygus:zeroGain},
\ref{eq:learner:sygus:valid} and~\ref{eq:learner:sygus:invalid}.

\vspace{-1em}
{\footnotesize
\begin{alignat}{2}
%
  f(x) & \gets \INRANDOP(x), \label{eq:learner:sygus:zeroGain} \\
  f(y) & \gets \LCOP(y, x) \land f(x), \nonumber \\
  f(y) & \gets \RCOP(y, x) \land f(x), \nonumber \\ 
  \RUD(x) & \gets \XOROP(x) \land \LCOP(x, L) \land \RCOP(x, R) \land \RUD(L) \land f(R). \nonumber
  \\[0.75em]
  g(x, x) & \gets \INRANDOP(x), \label{eq:learner:sygus:valid} \\
  g(y, z) & \gets \LCOP(y, x) \land g(x, z), \nonumber \\
  g(y, z) & \gets \RCOP(y, x) \land g(x, z), \nonumber \\
  h(x) & \gets \LCOP(x, L) \land \RCOP(x, R) \land g(L, x_L) \land g(R, x_R) \land x_L = x_R, \nonumber \\
  \RUD(x) & \gets \XOROP(x) \land \RUD(L) \land \RUD(R) \land \LCOP(x, L) \land \RCOP(x, R) \land \lnot h(x). \nonumber
  \\[0.75em]
  g(x, x) & \gets \INKEYOP(x), \label{eq:learner:sygus:invalid} \\
  g(y, z) & \gets \LCOP(y, x) \land g(x, z), \nonumber \\
  g(y, z) & \gets \RCOP(y, x) \land g(x, z). \nonumber \\
  h(x) & \gets \LCOP(x, L) \land \RCOP(x, R) \land g(L, x_L) \land g(R, x_R) \land x_L = x_R,  \nonumber \\
  \RUD(x) & \gets \XOROP(x) \land \RUD(L) \land \RUD(R) \land \lnot h(x). \nonumber
\end{alignat} \vspace{-1.5em}
}

Since the information gain of Rule~\ref{eq:learner:sygus:zeroGain}
applying to \{$\mathlst{b4}, \mathlst{n1}$\} is zero, it gets
discarded (Line~6 in Algorithm~\ref{alg:learner:sygus}).  In contrast,
the information gains of Rules~\ref{eq:learner:sygus:valid} and
\ref{eq:learner:sygus:invalid} are both positive.
Rule~\ref{eq:learner:sygus:valid} intuitively requires that both the
left and right operands of $x$ are of type $\RUD$, and that they do
not share any random inputs in computing $\lnot
h(x)$. Rule~\ref{eq:learner:sygus:invalid} requires that the same
secret key be used in the computation of both operands. 
While Rule~\ref{eq:learner:sygus:valid} is sound when applied to arbitrary
programs, Rule~\ref{eq:learner:sygus:invalid} is unsound. 
In
the next section, we will present an algorithm that can check the
soundness of these learned rules. 
\end{example}

\section{Proving the Inference Rules}
\label{sec:prover}


We wish to prove that a learned rule, denoted $\alpha$, never reaches
unsound conclusions when applied to any program, by showing that it
can be deduced from a \emph{known-to-be-correct} knowledge base
($KB$).  More specifically, we wish to confirm that every AST node $x$
marked as $\RUD$ (or $\SID$) by $\alpha$ can be certified to be $\RUD$
(or $\SID$) by $KB$.
When both $\alpha$ and $KB$ are expressed in Datalog, the problem reduces
to one of determining query containment, e.g., for every valuation of
the input relations, $\RUD_\alpha \subseteq \RUD_{KB}$ (or $\SID_\alpha \subseteq \SID_{KB}$).
We will now describe a semi-decision procedure to verify the soundness of the learned rules $\alpha$, which forms the second phase of the synthesis loop in  \GPS{}.

\subsection{Representation of the Learned Rule ($\alpha$)}

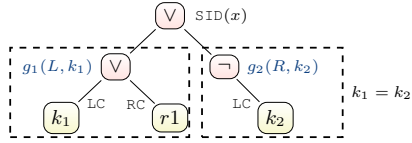
\begin{figure}
\centering
\scalebox{0.85}{

\begin{tikzpicture}[font=\scriptsize] 
  [
    grow                    = right,
    sibling distance        = 8em,
    level distance          = 2em,
    edge from parent/.style = {draw, -latex},
    every node/.style       = {font=\footnotesize},
    sloped
  ]
  \node [root] at (-1.8, 0) {$\vee$}  
  child{ node [root, label=left:{\textcolor{darkermain}{$g_1(L, k_1)$}}] at (-.1, .7) {$\vee$}  
   child{node [env] at (-.1, .7)  {$k_1$} 
   edge from parent node  [below, xshift=1mm, align=center]{$\LCOP$}
   }
   child{
   node [env] at (.1, .7)  {$r1$}
   edge from parent node  [below, xshift=-1mm, align=center]{$\RCOP$}
   }
  edge from parent node  [below, align=center]{}
  }
  child{ node [root, label=right:{\textcolor{darkermain}{$g_2(R, k_2)$}}] at (.1, .7) {$\lnot$}  
   child{ node [env] at (.8, .7)  {$k_2$}
   edge from parent node  [below, xshift=-1mm, align=center]{$\LCOP$}
   }
   edge from parent node  [below, align=center]{}
  };
  
 \node[minimum width=28mm, minimum height=14mm, dashed, draw=black, thick, align=center] (PR1) at (-2.9, -1.2) {};
  \node[minimum width=22mm, minimum height=14mm, dashed, draw=black, thick, align=center, label=right:{$k_1 = k_2$}] (PR1) at (-.2, -1.2) {};
  \node at (-1,0) {$\SID(x)$};
\end{tikzpicture}
\vspace{-.5em}
\caption{Example $AST$ from which $\alpha$ is learned.}
\label{proofAST}
\vspace{-0.3em}
\end{figure}

Let $\alpha$ be a set of Datalog rules, each of which has a head
relation $\phi_{\alpha}$ and a body of the following form:
%
%
\begin{alignat}{1}
  \phi_{\alpha}(\bm{x}) & \gets \phi_1(\bm{x}_1) \land \phi_2(\bm{x}_2) \land \dots \land
                                   \phi_n(\bm{x}_n)
\end{alignat}
It means $\phi_{\alpha}$ holds only when all of $\phi_1,\ldots,\phi_n$ hold.
%
Here, $\phi_{\alpha}$ may be a distribution type, e.g., $\SID(x)$, or
a recursive feature $g(x,y)$, e.g.,  representing that $x$ depends on $y$.
%
%

\begin{figure*}
{\footnotesize
\begin{mathpar}
{\footnotesize
\inferrule*[right=($D_{1.1}$)]{\Gamma \vdash x : \INRANDOP}{supp(x, \{x\} )} \hspace{4ex}
\inferrule*[right=($D_{1.2}$)]{\Gamma \vdash x : \INKEYOP}{supp\text{(}x, \{x\} \text{)}} \hspace{4ex}
\inferrule*[right=($D_{1.3}$)]{\Gamma \vdash x : \INPUBOP}{supp\text{(}x, \{x\} \text{)}} \hspace{4ex}
\inferrule*[right=($D_{2.1}$)]{\Gamma \vdash x : \INRANDOP}{dom(x, \{x\})} \hspace{3ex}
\inferrule*[right=($D_{2.2}$)]{\Gamma \vdash x : \INKEYOP}{dom(x, \emptyset)}
}
\end{mathpar}
\begin{mathpar}
{\footnotesize
\inferrule*[right=($D_{1.4}$)]{\Gamma \vdash x,y : \mathsf{v},~ \Gamma \vdash S : \mathsf{Set}~\mathsf{v}, \RCOP (y, x_1) \wedge \LCOP ( y, x_2 ) \wedge supp(x_1, S_{1}) \wedge supp(x_2, S_{2})}{supp(y, S_1\cup S_2)} \hspace{10ex}
\inferrule*[right=($D_{2.3}$)]{\Gamma \vdash x : \INPUBOP}{dom\text{(}x, \emptyset \text{)}} 
}
\end{mathpar}
\begin{mathpar}
{\footnotesize
\inferrule*[right=($D_{2.4}$)]{\Gamma \vdash x,y : \mathsf{v}, \Gamma \vdash S : \mathsf{Set}~\mathsf{v},  \RCOP ( y, x_1 ) \wedge \LCOP ( y, x_2 ) \wedge \XOROP ( y ) \wedge dom(x_1, S_{1}) \wedge dom(x_2, S_{2})}{dom(x, (S_1 \cup S_2 )/(S_1 \cap S_2 ))}
}
\end{mathpar}
\begin{mathpar}
{\footnotesize
\inferrule*[right=($D_{3}$)]{\Gamma \vdash x : \mathsf{v}, \Gamma \vdash S : \mathsf{Set}~\mathsf{v}, dom(x, S_{x}) \wedge S_{x} \neq \emptyset}{\Gamma \vdash x : \RUD} \hspace{10ex}
\inferrule*[right=($D_{4}$)]{\Gamma \vdash x : \INKEYOP, \Gamma \vdash S :  \mathsf{Set}~\INKEYOP}{\Gamma \vdash x :: S : \mathsf{Set}~\INKEYOP} 
}
\end{mathpar}
\begin{mathpar}
{\footnotesize
\inferrule*[right=($D_{5}$)]{\inferrule{\Gamma \vdash x : \mathsf{v},~ \Gamma \vdash S_k : \mathsf{Set}~\INKEYOP, ~\\\\~ \Gamma \vdash S_d : \mathsf{Set}~\RUD,~ \Gamma \vdash S_s : \mathsf{Set}~\mathsf{v},~\\\\~ dom(x, S_{d}) \wedge S_{d} \text{=} \emptyset  \wedge supp(x, S_{s}) \wedge S_{s} \cap S_k \text{=} \emptyset}{} }{\Gamma \vdash x : \SID} \hspace{10ex}
\inferrule*[right=($D_{6}$)]{\inferrule {\Gamma \vdash x_1 : \SID,~\Gamma \vdash x_2 : \RUD,~\Gamma \vdash S_1, S_2 : \mathsf{Set}~\mathsf{v}~ \\\\ \LCOP ( y, x_1 ) \wedge \RCOP( y, x_2) \wedge \ANDOP ( y ) ~ \\\\ \wedge supp (x_1, S_1) \wedge supp (x_2, S_2) \wedge S_1 \cap S_2 \text{=} \emptyset}{} }{\Gamma \vdash y : \SID}
}
\end{mathpar}
\begin{mathpar}
{\footnotesize
\inferrule*[right=($D_{7}$)]{\inferrule {\Gamma \vdash x_1 : \SID,~\Gamma \vdash x_2 : \RUD,~\Gamma \vdash S_1, S_2 : \mathsf{Set}~\mathsf{v}~ \\\\  \LCOP ( y, x_1 ) \wedge \RCOP ( y, x_2) \wedge \OROP ( y ) \wedge \\\\ supp (x_1, S_1 ) \wedge supp (x_2, S_2) \wedge S_1 \cap S_2 \text{=} \emptyset}{} }{\Gamma \vdash y : \SID}  \hspace{10ex}
\inferrule*[right=($D_{8}$)]{\inferrule {\Gamma \vdash x_1 : \SID,~\Gamma \vdash x_2 : \SID,~\Gamma \vdash S_1, S_2 : \mathsf{Set}~\mathsf{v}~ \\\\ \LCOP \text{(} y, x_1 \text{)} \wedge \RCOP( y, x_2) \wedge \\\\  supp (x_1, S_1) \wedge supp (x_2, S_2) \wedge S_1 \cap S_2 \text{=} \emptyset}{} }{\Gamma \vdash y : \SID} 
}
\end{mathpar}
\begin{mathpar}
{\footnotesize
\inferrule*[right=($D_{9}$)]{\inferrule {\Gamma \vdash x_1 : \SID,~\Gamma \vdash x_2 : \SID,~\Gamma \vdash S_1: \mathsf{Set}~\RUD,~\Gamma \vdash S_2: \mathsf{Set}~\mathsf{v},\\\\ \ANDOP(y) \wedge  \LCOP ( y, x_1) \wedge \RCOP (y, x_2) \wedge   dom(x_1, S_1) \wedge supp(x_2, S_2)\wedge S_1 \cap S_2 \neq \emptyset}{} }{\Gamma \vdash y : \SID}
}
\end{mathpar}
\begin{mathpar}
{\footnotesize
\inferrule*[right=($D_{a}$)]{\inferrule {\Gamma \vdash x_1 : \SID,~\Gamma \vdash x_2 : \SID,~\Gamma \vdash S_1 : \mathsf{Set}~\RUD,~ \\\\~\Gamma \vdash S_2: \mathsf{Set}~\mathsf{v},~ \OROP(y) \wedge \LCOP \text{(} y, x_1 \text{)} \wedge \RCOP ( y, x_2) \wedge \\\\  dom (x_1, S_1) \wedge supp (x_2, S_2) \wedge S_1 / S_2 \neq \emptyset}{} }{\Gamma \vdash y : \SID}
\hspace{10ex}
\inferrule*[right=($D_{b}$)]{\inferrule {\Gamma \vdash x_1 : \RUD,~\Gamma \vdash x_2 : \RUD,~\Gamma \vdash S_1 : \mathsf{Set}~\RUD,~ \\\\ ~\Gamma \vdash S_2 : \mathsf{Set}~\mathsf{v},~\ANDOP \text{(} y \text{)} \wedge \LCOP \text{(} y, x_1 \text{)} \wedge \RCOP\text{(} y, x_2\text{)} \wedge \\\\  dom \text{(}x_1, S_1 \text{)} \wedge supp \text{(}x_2, S_2 \text{)} \wedge S_2 / S_1 \neq \emptyset}{} }{\Gamma \vdash y : \SID}
}
\end{mathpar}
\begin{mathpar}
{\footnotesize
\inferrule*[right=($D_{c}$)]{\inferrule {\Gamma \vdash x_1 : \RUD,~\Gamma \vdash x_2 : \RUD,~\Gamma \vdash S_1: \mathsf{Set}~\RUD,~\Gamma \vdash S_2: \mathsf{Set}~\mathsf{v},  \\\\  ~\OROP \text{(} y \text{)} \wedge ~\LCOP \text{(} y, x_1 \text{)} \wedge \RCOP\text{(} y, x_2\text{)} \wedge  dom \text{(}x_1, S_1 \text{)} \wedge supp \text{(}x_2, S_2 \text{)} \wedge S_2 / S_1 \neq \emptyset}{} }{\Gamma \vdash y : \SID}  
}
\end{mathpar}
\begin{mathpar}
{\footnotesize
\inferrule*[right=($D_{d}$)]{\Gamma \vdash x : \RUD}{\Gamma \vdash x : \texttt{NOUKD}} \hspace{1em} \hspace{5ex}
\inferrule*[right=($D_{e}$)]{\Gamma \vdash x : \SID}{\Gamma \vdash x : \texttt{NOUKD}} \hspace{1em} \hspace{5ex}
\inferrule*[right=($D_{f}$)]{\Gamma \vdash x : v,~\NOTOP \text{(} y \text{)} \wedge \LCOP \text{(}y, x\text{)}}{\Gamma \vdash y : v}
\hspace{5ex}
\inferrule*[right=($D_{10}$)]{\Gamma \vdash x : \mathsf{bool},~ x \text{=} \mathsf{true}}{\Gamma \vdash x : \SID}
}
\end{mathpar}
\begin{mathpar}
{\footnotesize
\inferrule*[right=($D_{11}$)]{\Gamma \vdash x : \mathsf{bool},~ x \text{=} \mathsf{false}}{\Gamma \vdash x : \SID} \hspace{10ex}
\inferrule*[right=($D_{12}$)]{{\inferrule{\Gamma \vdash x : \mathsf{v},~ \Gamma \vdash S_k : \mathsf{Set}~\INKEYOP, \Gamma \vdash S : \mathsf{Set}~\mathsf{v},~ supp\text{(}x, S_{s}\text{)} \wedge S_{s} \cap S_k \text{=} \emptyset}{} }
}{\Gamma \vdash x :  \texttt{NOUKD}} \vspace{1em}
}
\end{mathpar}
\begin{mathpar}
{\footnotesize
\inferrule*[right=($D_{13}$)]{\inferrule {\Gamma \vdash x_1 : \RUD,~\Gamma \vdash x_2 : \RUD,~\Gamma \vdash S_1, S_2 : \mathsf{Set}~\RUD,~ \\\\ 
\LCOP \text{(} y, x_1 \text{)} \wedge \RCOP\text{(} y, x_2\text{)} \wedge \MULOP(y) \wedge
\\\\
\text{(} y \text{)} \wedge dom \text{(}x_1, S_1 \text{)} \wedge dom \text{(}x_2, S_2 \text{)} \wedge S_2 / S_1 \neq \emptyset}{} }{\Gamma \vdash y : \SID}  
\hspace{10ex}
\inferrule*[right=($D_{14}$)]{\Gamma \vdash x_1 : \RUD,~\Gamma \vdash x_2 : \SID,~\Gamma \vdash S_1 : \mathsf{Set}~\RUD\\\\ \Gamma \vdash~S_2 : \mathsf{Set}~\mathsf{v},~ 
\LCOP \text{(} y, x_1 \text{)} \wedge \RCOP\text{(} y, x_2\text{)}~ \MULOP \text{(} y \text{)} \wedge
\\\\
dom \text{(}x_1, S_1 \text{)} \wedge supp \text{(}x_2, S_2 \text{)} \wedge S_1 / S_2 \neq \emptyset}{\Gamma \vdash y : \SID}
\hspace{1em}
}
\end{mathpar}
\begin{mathpar}
{\footnotesize
\inferrule*[right=($D_{15}$)]{\Gamma \vdash x_1 : \SID,~\Gamma \vdash x_2 : \RUD,~\Gamma \vdash S_1 : \mathsf{Set}~\RUD,~S_2 : \mathsf{Set}~\mathsf{v},~ \\\\ \LCOP \text{(} y, x_1 \text{)} \wedge \RCOP\text{(} y, x_2\text{)} \wedge \MULOP \text{(} y \text{)} \wedge dom \text{(}x_1, S_1 \text{)} \wedge supp \text{(}x_2, S_2 \text{)} \wedge S_2 / S_1 \neq \emptyset}{\Gamma \vdash y : \SID}
}
\end{mathpar}
}
\vspace{-2.15em}
\caption{Proof rules for distribution types, gathered from prior
works~\cite{zhang2018sc,eldib2014synthesis,barthe2015verified}. 
Here, $v$ denotes the type of variable $x$,and is of the following types: $\UKD$, $\SID$ and $\RUD$.
$\texttt{NOUKD}$ denotes the secure type (either $\RUD$ 
or $\SID$). All the predefined relations in $KB$ are the same as in $\alpha$.
}
\label{fig:kbRule}
\vspace{-1em}
\end{figure*}

\subsection{Representation of the Knowledge Base ($KB$)}
\label{sub:prover:kb}

Our $KB$ consists of two sets of proof rules, one for propositional
logic and the other for distribution types.

\vspace{1ex}
\noindent
\emph{Proof Rules for Propositional Logic.}
Fig.~\ref{boolean} shows the proof rules that represent axioms of
propositional logic~\cite{russell2002artificial}; they can be used to reduce any valid
(resp. invalid) Boolean formula to constant $true$ (resp. $false$).
Thus, they are useful in showing results such as $true \vee P$ and
$false \wedge Q$ are secret-independent ($\SID$), for arbitrary
logical sentences $P$ and $Q$.

Consider the example rule $\alpha$ below, where $g_1$ and $g_2$ are
synthesized features shown as dashed boxes in Fig.~\ref{proofAST}:
\begin{center}
\begin{tabular}{@{\hskip3pt}c@{\hskip3pt} c @{\hskip3pt}l@{\hskip3pt}}
{\footnotesize $\SID$($x$)} & {\footnotesize$\leftarrow$} & {\footnotesize $\OROP$($x$) $\wedge$ $\LCOP$($x$, $L$) 
$\wedge$ $\RCOP$($x$, $R$) $\wedge$ $\OROP$($L$)
$\wedge$ $\NOTOP$($R$) $\wedge$ } \\
 & & {\footnotesize $g_1$($L$, $k_1$)
$\wedge$ $g_2$($R$, $k_2$) $\wedge$ $\texttt{EQ}(k_1,k_2)$} \\
{\footnotesize $g_1$($L$, $k_1$)}& {\footnotesize $\leftarrow$} & 
{\footnotesize $\INKEYOP$($k_1$) $\wedge$ $\INRANDOP$($r_1$) $\wedge$ $\LCOP$($L$, $k_1$) $\wedge$ $\RCOP$($L$, $r_1$)} \\
{\footnotesize $g_2$($R$, $k_2$)} & {\footnotesize $\leftarrow$} & 
{\footnotesize $\INKEYOP$($k_2$) $\wedge$ $\LCOP$($R$, $k_2$) }
\end{tabular}
\end{center}
\noindent
Since $k_1=k_2$, we transform $\alpha$ into an equivalent logic formula: 
\begin{center}
{\footnotesize
$\SID(x) \leftarrow \texttt{EQ}(x, (k_1 \vee r_1) \vee (\neg k_1)) $ 
}
\end{center}
Rules $B1$, $B7$ and $Bf$ in Fig.~\ref{boolean} show that
$(k_1 \vee r_1) \vee (\neg k_1)$ is always $true$.  Thus, $x$ is
always $true$.  Since $x$ is a constant, we have $\SID$($x$), meaning $x$
is secret-independent.

Such $\SID$ rules, learned by our method automatically, and yet
overlooked by state-of-the-art, hand-crafted
analyzers~\cite{zhang2018sc,wang2019mitigating}, can significantly
improve the accuracy of side-channel analysis on many programs.

\ignore{
Note that repeated application of the rules shown in
Fig.~\ref{boolean} is not guaranteed to terminate.  The analysis
basically infers more equality relations by repeatedly applying the
equality analysis rules in Fig.~\ref{boolean}, with the help of
saturation engine~\cite{tate2009equality,kovacs2013first}.
Non-termination may happen when the equality saturation engine never
gets out of the equality analysis.  The axioms $B_6$: $b \vee false
\equiv b$ used in the direction from right to left can be applied
infinite times, generating successively larger expressions $(b, b \vee
false, (b \vee false) \vee false, ...)$.  As opposed to using
Fig.~\ref{boolean} as analyzer, here we leverage the rules of
Fig.~\ref{boolean} as the \emph{prover}, we are trying to prove the
following rule:
\begin{alignat}{1}
  \SID_{\alpha}(\bm{x}) & \gets \phi_1(\bm{x}_1) \land \phi_2(\bm{x}_2) \land \dots \land
                                   \phi_n(\bm{x}_n)
\end{alignat}
In the above rule, assuming that $\mathcal{S}$ represents the set of clauses in the body and more clauses
get added to $\mathcal{S}$ after applying equality analysis rules. 
As long as that there exists one constant clause ($true,
false$) transformed from the body of the original rule, the \emph{prover}
can terminate and validate the soundness. Otherwise, it would transfer
the soundness checking to the more comprehensive distribution rules
after hitting the timing bound.
}

%
\ignore{
%
While it is straightforward to add new rules to $KB$, in general, it
is difficult to know how useful each rule is in the analyzer.  
While adding too many inference rules to the analyzer is risky, e.g.,
it may drastically slow down the analyzer without bringing noticeable
increase in accuracy, adding them to $KB$ does not have such risk,
since $KB$ is used to prove the validity of $\alpha$ only during the
learning phase.
}

\vspace{1ex}
\noindent
\emph{Proof Rules for Distribution Types.}  
Fig.~\ref{fig:kbRule}
shows the proof rules that represent properties of the distribution
types.  They were collected from published
papers~\cite{zhang2018sc,eldib2014synthesis,barthe2015verified} that
focus on verifying masking countermeasures, which also provided the
soundness proofs of these rules.  For brevity, we omit the detailed
explanation. Instead, we use Rule~$D_{2.1}$ as an example to
illustrate the rationale behind these proof rules.

In Rule $D_{2.1}$, the $dom$($x$, $S$) relation means that variable $x$ is masked by
some input from the set $S$ of random inputs.
For example, in $y = x_1 \oplus x_2$, where $x_1 = k \oplus r_1 \oplus
r_2$ and $x_2 = b \oplus r_2$, we say that $x_2$ is masked by $r_2$,
and $x_1$ is masked by both $r_1$ and $r_2$.
However, since $r_2\oplus r_2 = false$, $y$ is masked only by $r_1$.
Thus, $dom$($y$, \{$r_1$\}) holds, but 
$dom(y,\{r_2\})$ does not hold.
In this sense, Rule $D_{2.4}$ defines a \emph{masking set}. For $y$, it is $S_y$ = (\{$r_1$, $r_2$\} $\cup$ \{$r_2$\}) $\setminus$ (\{$r_1$,
$r_2$\} $\cap$ \{$r_2$\}) = \{$r_1$\}, which contains $r_1$ only.
The masking set defined by $D_{2.4}$ is useful in that, as long as the
set is not empty, the corresponding variable is guaranteed to be
of the $\RUD$ type.

\ignore{
Even though these distribution rules are tailored to correctly
and soundly deriving the comprehensive examples cases as
compared with rules in Fig.~\ref{boolean}, it still
remains constrained in deriving the constant case, such as the AST example
shown in Fig.~\ref{proofAST} where the root variable $x$
is inferred as $\UKD$ by merely applying Fig.~\ref{fig:kbRule},
since it's inapplicable to any distribution rule.
Therefore, our $KB$ encompasses propositional rules as well
as distribution rules, capable of verifying more $\SID$ rules
and $\RUD$ rules while maintaining soundness.
}

\subsection{Proving the Soundness of $\alpha$ Using $KB$}
\label{sub:prover:ind}

To prove that for every AST node $x$ marked as $\RUD_\alpha(x)$
(resp. $\SID_\alpha(x)$) by $\alpha$, it is also marked as
$\RUD_{KB}(x)$ (resp. $\RUD_{KB}(x)$) by $KB$, we show that the
following relation $Ind(x)$ is empty for any valuation of the input
relations:
\begin{alignat}{1}
  Ind(x) & \gets \phi_\alpha(x) \land \lnot \phi_{KB}(x) ~,
\end{alignat}
where the relation $\phi$ may be instantiated to either $\RUD$ or
$\SID$.
In theory, this amounts to proving \emph{query containment}, which is
undecidable for Datalog in general~\cite{calvanese2005decidable,
calvanese1998decidability}, but there is a decidable Datalog
fragment~\cite{calvanese2005decidable, calvanese2003reasoning,
barcelo2014does}, and our meta-rules in
Fig.~\ref{fig:learner:sygus:metarules}
produce rules in this fragment.

\ignore{

In the remainder of this section, we present a semi-decision procedure
to prove the emptiness of $Ind(x)$.


\paragraph{Derivation trees, and unrolling a Datalog program}

First, we observe that every tuple $t = \phi(x)$ produced by a Datalog program is associated with one or more
\emph{derivation trees}. 
The heights of these derivation trees correspond to the
depth of rule inlining at which the program discovers $t$. In particular, for each inlining depth $k \in \N$, each rule
$\phi_h(\bm{x}_h) \gets \phi_1(\bm{x}_1) \land \phi_2(\bm{x}_2) \land \dots \land \phi_n(\bm{x}_n)$ can be transformed to
\begin{alignat}{1}
  \phi^{(k + 1)}(\bm{x}_h) & \gets \phi^{(k)}_1(\bm{x}_1) \land \phi^{(k)}_2(\bm{x}_2) \land \dots \land
                                   \phi^{(k)}_n(\bm{x}_n), \label{eq:prover:ind:unrolling}
\end{alignat}
where each $\phi^{(k)}$ contains exactly those tuples $\phi^{(k)}(\bm{x})$ which have a derivation
tree of depth $k$. Observe that $\phi = \bigcup_{k \in \N} \phi^{(k)}$.



\paragraph{Proving entailment at each unrolling depth.}

Our insight is to prove that at each unrolling depth $k$, $\phi^{(k)}_\alpha \subseteq \phi^{(k)}_{KB}$. In other words,
we define the relation $Ind^{(k)}$ as follows:
\begin{alignat}{1}
  Ind^{(k)}(x) \gets \phi^{(k)}_\alpha(x) \land \lnot \phi^{(k)}_{KB}(x), \label{eq:prover:ind:approx}
\end{alignat}
and extrapolate from the emptiness of $Ind^{(k)}$ for each $k$:
\begin{proposition}
If $Ind^{(k)}(x)$ is an empty relation for each inlining depth $k \in \N$, then $Ind(x)$ is also an empty relation.
\end{proposition}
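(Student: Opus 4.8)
The plan is to derive the statement directly from the union characterization $\phi = \bigcup_{k \in \N} \phi^{(k)}$ observed above, together with the fact that $\phi_\alpha$ and $\phi_{KB}$ are each obtained as a least fixed point of rule application. First I would restate the two emptiness hypotheses as containments. By the defining rule $Ind^{(k)}(x) \gets \phi^{(k)}_\alpha(x) \land \lnot \phi^{(k)}_{KB}(x)$, asserting that $Ind^{(k)}$ is empty for every $k$ is exactly the family of inclusions $\phi^{(k)}_\alpha \subseteq \phi^{(k)}_{KB}$; dually, the goal that $Ind(x) \gets \phi_\alpha(x) \land \lnot \phi_{KB}(x)$ be empty is the single inclusion $\phi_\alpha \subseteq \phi_{KB}$. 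Thus the proposition reduces to propagating level-wise containment to the limit.

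Next I would fix an arbitrary tuple $x \in \phi_\alpha$ and trace it through a finite level. Since the semantics of $\alpha$ is the least fixed point of rule application, $x$ admits a finite-height derivation tree, so $\phi_\alpha = \bigcup_{k} \phi^{(k)}_\alpha$ guarantees some depth $k$ with $x \in \phi^{(k)}_\alpha$. Invoking the hypothesis at this same $k$ gives $x \in \phi^{(k)}_{KB}$, and because $\phi^{(k)}_{KB} \subseteq \bigcup_{j} \phi^{(j)}_{KB} = \phi_{KB}$ we obtain $x \in \phi_{KB}$. As $x$ was arbitrary, $\phi_\alpha \subseteq \phi_{KB}$, that is, $Ind$ is empty, completing the argument.

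I do not anticipate a genuine obstacle: the content is a routine appeal to the union decomposition, and the only step deserving care is the well-foundedness of derivation trees, which is precisely what ensures that $x$ surfaces at a \emph{finite} depth $k$ rather than merely in the limit. One point I would make explicit is that the negation in $Ind$ does not disturb this reasoning: $Ind$ is the set difference $\phi_\alpha \setminus \phi_{KB}$ of two relations each computed by a monotone least fixed point, with the single negation applied \emph{on top of} those fixed points rather than inside the recursion. Consequently the layered relations $\phi^{(k)}$ are the ordinary monotone stages of the iteration, and the union argument applies verbatim. I would also note that the implication is one-directional---it certifies soundness of the check only---so no completeness or over-approximation analysis of the $\phi^{(k)}$ is required.
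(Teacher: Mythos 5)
Your proposal is correct and follows essentially the same argument as the paper's proof: both rest on the decomposition $\phi = \bigcup_{k \in \N} \phi^{(k)}$, locate an arbitrary tuple of $\phi_\alpha$ at some finite derivation depth $k$, apply the level-wise emptiness of $Ind^{(k)}$ (i.e., $\phi^{(k)}_\alpha \subseteq \phi^{(k)}_{KB}$) at that depth, and conclude membership in $\phi_{KB}$. The paper phrases this as a proof by contradiction while you argue directly, but the content is identical, and your closing remark that the converse direction is not claimed matches the paper's own observation that the implication is one-directional.
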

\begin{proof}
Assume otherwise, so the $Ind$ relation contains some AST node $x$. By definition then, the candidate analysis $\alpha$
derives $\phi_\alpha(x)$, say at proof tree depth $l$, while $KB$ does not derive $\phi_{KB}(x)$. It follows that
$\phi^{(l)}_\alpha$ also contains $x$, and that $\phi^{(l)}_{KB}$ is an empty relation. We conclude that $x$ is an
element of $Ind^{(l)}$, which contradicts our hypothesis that $Ind^{(l)}$ was empty.
\end{proof}

\begin{note}
The converse of the above proposition need not hold. In particular, it may be the case that $Ind(x)$ is empty, even
though $Ind^{(k)}$ is inhabited. This is a curious consequence of connecting the inlining depths of $\phi_\alpha$ and
$\phi_{KB}$ in Equation~\ref{eq:prover:ind:approx}: there may be a tuple $\phi^{(k)}_\alpha(x)$ which is derived by $KB$
at some other inlining depth $k' \neq k$. In that case, since $x$ would then be absent from $\phi^{(k)}_{KB}(x)$, it
would inhabit $Ind^{(k)}(x)$, but not occur in $Ind(x)$.
\end{note}

}



First, we observe that every tuple $t = \phi(x)$ produced by a Datalog program is associated with one or more
\emph{derivation trees}. 
The heights of these derivation trees correspond to the
depth of rule inlining at which the program discovers $t$. In particular, for each inlining depth $k \in \N$, each rule
$\phi_h(\bm{x}_h) \gets \phi_1(\bm{x}_1) \land \phi_2(\bm{x}_2) \land \dots \land \phi_n(\bm{x}_n)$ is transformed
into the rule:
\begin{alignat}{1}
  \phi^{(k + 1)}(\bm{x}_h) & \gets \phi^{(k)}_1(\bm{x}_1) \land \phi^{(k)}_2(\bm{x}_2) \land \dots \land
                                   \phi^{(k)}_n(\bm{x}_n), \label{eq:prover:ind:unrolling}
\end{alignat}
Our insight is to prove that at each unrolling depth $k$, we have $\phi^{(k)}_\alpha \subseteq \phi^{(k)}_{KB}$. Thus,
we define the relation $Ind^{(k)}$ as follows:
\begin{alignat}{1}
  Ind^{(k)}(x) \gets \phi^{(k)}_\alpha(x) \land \lnot \phi^{(k)}_{KB}(x), \label{eq:prover:ind:approx}
\end{alignat}
and prove the emptiness of $Ind(x)$ by $k$-induction~\cite{sheeran2000checking,
  donaldson2011software, gadelha2017handling}.

\begin{proposition}
If $Ind^{(k)}(x)$ is an empty relation for each depth $k \in \N$, then $Ind(x)$ is an empty relation.
\end{proposition}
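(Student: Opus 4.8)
The plan is to argue by contraposition: I would assume that $Ind(x)$ is inhabited by some AST node $x$ and show that this forces $Ind^{(l)}(x)$ to be inhabited for some concrete depth $l$, contradicting the hypothesis that every $Ind^{(k)}$ is empty.

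First I would record the two structural facts that drive everything. From the least-fixpoint semantics of Datalog together with the unrolling in Equation~\ref{eq:prover:ind:unrolling}, each relation decomposes as the union of its depth-stratified approximants, $\phi_\alpha = \bigcup_{k} \phi_\alpha^{(k)}$ and $\phi_{KB} = \bigcup_{k} \phi_{KB}^{(k)}$, where $\phi^{(k)}$ collects exactly the tuples admitting a derivation tree of height $k$; every tuple in $\phi$ has a derivation tree of some finite height, so the union is exhaustive. I would also note that, because $\phi_{KB}$ is itself a least fixpoint and the negation in $Ind$ appears only as the outer literal $\lnot\phi_{KB}$, membership in $\phi_{KB}$ coincides with membership at some finite depth, so that $x \notin \phi_{KB}$ is equivalent to $x \notin \phi_{KB}^{(m)}$ holding for all $m$. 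This makes the negated literal depth-insensitive.

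The core argument is then short. Suppose $x \in Ind$. Unfolding the definition, $\phi_\alpha(x)$ holds while $\lnot\phi_{KB}(x)$ holds. Using the first fact, $\phi_\alpha(x)$ supplies a derivation tree of some height $l$, hence $x \in \phi_\alpha^{(l)}$. Using the depth-insensitivity of the negated literal, $x \notin \phi_{KB}$ yields $x \notin \phi_{KB}^{(l)}$ in particular. Substituting into Equation~\ref{eq:prover:ind:approx}, both conjuncts $\phi_\alpha^{(l)}(x)$ and $\lnot\phi_{KB}^{(l)}(x)$ are satisfied, so $x \in Ind^{(l)}$, which contradicts the assumption that $Ind^{(l)}$ is empty.

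I expect the main obstacle to be conceptual rather than computational: the definition of $Ind^{(k)}$ in Equation~\ref{eq:prover:ind:approx} couples $\alpha$ and $KB$ at the \emph{same} unrolling depth $k$, and one must be sure this coupling does not break the argument. The resolution, which I would make explicit, is asymmetric in the two conjuncts: the positive literal $\phi_\alpha$ needs only a single witnessing depth $l$, whereas the negative literal $\lnot\phi_{KB}$ holds simultaneously at every depth, so re-evaluating $KB$ at that same $l$ is automatically justified. This asymmetry is precisely why the converse direction fails, since a tuple can lie in $\phi_\alpha^{(k)}$ yet be derived by $KB$ only at a different depth $k' \neq k$, but it leaves the forward implication intact. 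I would therefore flag the least-fixpoint identity $\phi = \bigcup_{k} \phi^{(k)}$ and the stratified placement of the negation as the two facts that must be firmly in place for the proof to be rigorous.
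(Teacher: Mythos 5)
Your proposal is correct and follows essentially the same route as the paper's own proof: assume $Ind$ is inhabited by some $x$, take a witnessing derivation depth $l$ for $\phi_\alpha(x)$, observe that $x \notin \phi_{KB}$ forces $x \notin \phi_{KB}^{(l)}$, and conclude $x \in Ind^{(l)}$, a contradiction. If anything, your explicit treatment of the asymmetry between the positive and negative literals is slightly more careful than the paper's wording, which loosely asserts that $\phi_{KB}^{(l)}$ is empty rather than merely that it omits $x$.
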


Observe that unrolling the rules of a Datalog program to any specific depth yields a formula which can be interpreted
within propositional logic. For example, unrolling $f(x)$ from Equation~\ref{eq:learner:sygus:zeroGain} at depths $1$
and $2$ gives us
{
\begin{alignat*}{1}
  f^{(1)}(x) & = \INRANDOP(x), \text{ and} \\
  f^{(2)}(y) & = (\LCOP(y, x) \land f^{1}(x)) \lor (\RCOP(y, x) \land f^{1}(x)).
\end{alignat*}%
}%
For any specific value of $k$, we can therefore use an SMT solver to verify the emptiness of 
$Ind^{(k)}$.

For the induction step, in particular, we ask the SMT solver to check
if $Ind^{(k)}$ can be non-empty while the $i$ preceding relations
$Ind^{(k-1)}$, $\dots$, $Ind^{(k - i)}$ are assumed to be empty.
Here, $\phi^{(k)}_\alpha$ is expressed recursively using
$\phi^{(k-1)}_\alpha$, $\ldots$, $\phi^{(k-i)}_\alpha$ and induction
succeeds if there exists such a value for $i\in\N$.

Let $V^{(k)}$ be free variables introduced by unrolling the rules at
depth $k$. We assert the non-emptiness of $Ind^{(k)}$ below:
\begin{alignat}{1}
  \Phi^{(k)} & = \bigvee_{x \in V^{(k)}} Ind^{(k)}(x).
\end{alignat}
Thus, we formalize the induction step of the proof by constructing the following formula:
\begin{alignat}{1}
  \Psi^{(k)} & = \lnot \Phi^{(k-i)} \land \dots \land \lnot \Phi^{(k-1)} \land \Phi^{(k)}
\end{alignat}
%
%
\begin{proposition}
\label{prop:prover:ind:k}
If for some $i \in \N$, the relations $Ind^{(1)}$, \dots, $Ind^{(i)}$ are all empty (the base case), and the formula $\Psi^{(k)}$ as defined above is unsatisfiable (the induction step), then $Ind^{(k)}$ is empty for all $k \in \N$.
\end{proposition}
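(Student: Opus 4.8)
The plan is to prove the statement by strong (complete) induction on the unrolling depth $k$, performed uniformly over every valuation of the input relations. The enabling observation is the semantic reading of $\Phi^{(k)}$: since $\Phi^{(k)} = \bigvee_{x \in V^{(k)}} Ind^{(k)}(x)$, under a fixed valuation $\Phi^{(k)}$ is true exactly when $Ind^{(k)}$ is non-empty, so that $\lnot\Phi^{(k)}$ is precisely the assertion ``$Ind^{(k)}$ is empty''. Read through this lens, the base-case hypothesis becomes $\lnot\Phi^{(1)}, \dots, \lnot\Phi^{(i)}$ (for every valuation), and the unsatisfiability of $\Psi^{(k)} = \lnot\Phi^{(k-i)} \land \dots \land \lnot\Phi^{(k-1)} \land \Phi^{(k)}$ says that, under every valuation, the conjunction $\lnot\Phi^{(k-i)} \land \dots \land \lnot\Phi^{(k-1)}$ can never hold together with $\Phi^{(k)}$; equivalently, it entails $\lnot\Phi^{(k)}$.

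First I would fix an arbitrary valuation $I$ of the input relations and show that $\lnot\Phi^{(k)}$ holds under $I$ for all $k \in \N$. The base case $1 \le k \le i$ is immediate from the first hypothesis. For the inductive step I take $k > i$ and invoke the strong induction hypothesis, which supplies $\lnot\Phi^{(j)}$ under $I$ for every $1 \le j < k$, in particular for the $i$ consecutive indices $j = k-1, \dots, k-i$. Instantiating the unsatisfiability of $\Psi^{(k)}$ at $I$ and discharging these $i$ conjuncts forces $\Phi^{(k)}$ to be false under $I$, that is, $\lnot\Phi^{(k)}$ holds. Since $I$ was arbitrary, $Ind^{(k)}$ is empty under every valuation, which is the desired conclusion.

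The propositional reasoning in the inductive step is routine; the one place I expect to require genuine care, and the main obstacle to stating the argument cleanly, is the bookkeeping on the window of lookback indices. One must check that whenever $k > i$ the indices $k-i, \dots, k-1$ all lie in $[1, k-1]$, so that they are covered by the strong induction hypothesis and so that $\Psi^{(k)}$ is even well-formed (it never refers to a non-positive unrolling depth). This is exactly the point where the split index $i$ separating the base case from the inductive step must coincide with the lookback width built into $\Psi^{(k)}$; making this alignment explicit is what guarantees the $k$-induction is sound, and it deserves to be spelled out rather than left implicit.
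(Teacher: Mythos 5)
Your proof is correct and is the standard soundness argument for $k$-induction: strong induction on the unrolling depth, using the $i$ base cases and discharging the lookback window $\lnot\Phi^{(k-i)},\dots,\lnot\Phi^{(k-1)}$ via the unsatisfiability of $\Psi^{(k)}$, all under an arbitrary fixed valuation of the input relations. The paper states this proposition without proof, so there is nothing to diverge from; your argument --- including the index bookkeeping ($k>i$ guarantees $k-i\ge 1$, so the window lies inside the induction hypothesis) and the explicit reading that unsatisfiability must hold for every $k>i$ and every valuation --- is exactly the intended one.
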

Starting from $i = 1$, we use the SMT solver to check Proposition~\ref{prop:prover:ind:k} for increasingly larger $i$ until a timeout is reached. If the SMT solver is ever successful in proving the proposition, it follows that
the learned rule $\alpha$ is sound.

\subsection{Generating Abstract Counter-Examples}
\label{sub:prover:cex}

When the proof fails, however, we need to prevent the same rule from
being learned again to guarantee progress. 
%
%
%
%
Let $\pi = \{ f_1 = v_1, f_2 = v_2, \dots, f_k = v_k \}$ be the feature valuation in the failing
rule $R_\pi$. We then construct the counter-example,
\begin{center}
{\normalsize
$CE_\pi  = \{ f \mapsto v \mid (f, v) \in \pi \} \union \{ f \mapsto -1 \mid f \in \mathcal{F} \setminus \pi \}$
}
\end{center}
with label $\UKD(CE_\pi)$. Recall that $\mathcal{F}$ is the set of all features currently under consideration. 
Therefore, the feedback $CE_\pi$ provided to
$\DTL(\mathcal{E}, \mathcal{F})$ is an \emph{abstract} counter-example, with all missing features $f \in \mathcal{F} \setminus
\pi$ set to the unknown value $-1$.

Consider the subsequent iteration of the decision tree learner, $\DTL(\mathcal{E} \union \{ CE_\pi \}, \mathcal{F})$.
Observe that whenever it is in a decision context which is also a prefix $\pi_{pre}$ of the  counter-example
$CE_\pi$, the information gain of each feature $f \in \pi$ is strictly less than that encountered in the previous
invocation. Therefore, at some level of the decision tree, it will either choose a different feature, or invoke the feature synthesis algorithm to grow $\mathcal{F}$. By formalizing this argument, we say that:
\begin{proposition}
Given a counter-example $CE_\pi$ to a learned rule $R_\pi$, the subsequent invocation of the learner $\DTL(\mathcal{E} \union \{ CE_\pi \}, \mathcal{F})$ is guaranteed to no longer produce $R_\pi$.
\end{proposition}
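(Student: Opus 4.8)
The plan is to argue by contradiction: I assume that the subsequent invocation $\DTL(\mathcal{E} \union \{ CE_\pi \}, \mathcal{F})$ does reproduce $R_\pi$, and derive a contradiction from the leaf-producing base case of Algorithm~\ref{alg:learner}. Recall that $R_\pi$ corresponds to a single root-to-leaf path of the decision tree whose branch conditions are exactly the valuation $\pi = \{ f_1 = v_1, \dots, f_k = v_k \}$ and whose leaf carries the label $t$ concluded by the rule. Since $R_\pi$ was submitted to the soundness prover, its conclusion is an $\RUD$ or $\SID$ judgement, so $t \in \{ \RUD, \SID \}$ and in particular $t \neq \UKD$. Write $\mathcal{E}' = \mathcal{E} \union \{ CE_\pi \}$ and let $\mathcal{E}'\rvert_{\pi}$ denote those examples of $\mathcal{E}'$ satisfying every condition $f_i = v_i$ of $\pi$; because each decision node recurses with the reduced feature set $\mathcal{F} \setminus \{ f^* \}$, producing the path $\pi$ forces the learner into the terminal recursive call $\DTL(\mathcal{E}'\rvert_{\pi}, \mathcal{F} \setminus \pi)$, from which it must return $\operatorname{LeafNode}(t)$.

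The first key step is to show that $CE_\pi$ itself lands in $\mathcal{E}'\rvert_{\pi}$. This is immediate from the construction of the counter-example: for every $(f_i, v_i) \in \pi$ we set $f_i \mapsto v_i$, so $CE_\pi$ satisfies each branch condition $f_i(x) = v_i$ along the path and flows down to the terminal node. The filler assignments $f \mapsto -1$ for $f \in \mathcal{F} \setminus \pi$ are irrelevant here, since the path $\pi$ tests only features in $\pi$. The second key step is to observe that $\operatorname{LeafNode}(t)$ can be returned only through the base case, which fires exactly when \emph{all} examples in the current set share the label $t$. But $CE_\pi \in \mathcal{E}'\rvert_{\pi}$ carries the label $\UKD \neq t$, so $\mathcal{E}'\rvert_{\pi}$ is impure and the base case cannot return label $t$; this contradicts the assumption that the learner produces $R_\pi$.

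To make the contradiction fully rigorous I would close the case analysis of what the learner can do at the impure node $\mathcal{E}'\rvert_{\pi}$: either it finds a feature with strictly positive information gain, possibly after invoking $\FeatureSyn$, in which case it splits and yields rules whose paths strictly extend $\pi$ with a feature drawn from $\mathcal{F} \setminus \pi$ and hence differ from $R_\pi$; or no such feature exists and it conservatively emits the sound default $\UKD \neq t$. In every branch the emitted rule differs from $R_\pi$. The main obstacle, and really the only delicate point, is establishing that $CE_\pi$ reaches \emph{the same} terminal context as $R_\pi$ irrespective of the greedy feature choices; the contrapositive framing sidesteps this, since I only need the structural fact that any output equal to $R_\pi$ must route $CE_\pi$ to a leaf it cannot legally label. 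An equivalent, more constructive route is the information-gain argument sketched just above the statement: adding $CE_\pi$ with a fresh label strictly decreases the information gain of each $f \in \pi$ in every decision context that is a prefix of $\pi$, so the greedy learner is eventually forced either to pick a different feature or to grow $\mathcal{F}$ by synthesis. I would present the contradiction argument as the primary proof and mention this as the alternative.
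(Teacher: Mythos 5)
Your argument is correct, and it reaches the conclusion by a different mechanism than the paper does. The paper's justification is the information-gain observation in the paragraph preceding the proposition: after $CE_\pi$ is added, the gain of every feature $f \in \pi$ strictly decreases in every decision context that is a prefix of $\pi$, so the greedy learner must ``at some level'' pick a different feature or call $\FeatureSyn$. You instead argue by contradiction from the purity base case of $\DTL$: any tree that reproduces the exact path $\pi$ routes $CE_\pi$ (which by construction satisfies every condition $f_i = v_i$) into the terminal example set $\mathcal{E}'\rvert_{\pi}$, and since $CE_\pi$ carries the label $\UKD$ while a rejected rule must conclude $t \in \{\RUD, \SID\}$, that set is impure and $\operatorname{LeafNode}(t)$ cannot be emitted there; the remaining cases (further splitting, or the conservative $\UKD$ fallback when synthesis fails) all yield rules other than $R_\pi$. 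Your route is arguably tighter: a strict decrease in information gain does not by itself force the greedy learner to choose a \emph{different} feature (the decremented feature could still be the argmin), whereas the purity argument closes the proof regardless of which features the greedy procedure selects along the way, and it even rules out any path whose condition set is a subset of $\pi$ with conclusion $t$. What the paper's entropy-based phrasing buys is a constructive account of \emph{where} the new tree must diverge, which connects the proposition to the progress of the overall CEGIS-style loop; your contrapositive framing deliberately sidesteps that (as you note) at no cost to the stated claim. Both arguments rely on the same two facts --- $CE_\pi$ follows the path $\pi$, and its label is $\UKD \neq t$ --- so I would accept your proof as written, with the information-gain sketch relegated to a remark exactly as you propose.
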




Before ending this section, we stress that the \emph{proof rules} in $KB$ should not be confused with
\emph{analysis rules} used in the learned analyzer, since they are way more
computationally expensive.
Consider Rule~$D_{1.4}$, whose Datalog encoding size for $supp(x,S)$
would be $|V| \times 2^{|IN|}$.  For the benchmark named B19 in
Table~\ref{tbl:stats}, it owns 1250 input variables and thereby
causing the exponential explosion with $2^{1250}$.
The learned rule $\alpha$, in contrast, is much cheaper since it does
not rely on these expensive set (union and intersection) operations.

\section{Experiments}
\label{sec:experiment}

Our experiments were designed to answer the following research
questions (RQs):
\begin{itemize}
\item RQ1: How effective is our learned analyzer in terms of the analysis speed and accuracy?
\item RQ2: How effective is our \GPS{} method for learning  inference rules from training data?
\item RQ3: How effective is our \GPS{} method for proving the learned inference rules?
\end{itemize}

We implemented \GPS{} in LLVM 3.6.  \GPS{} relies on LLVM to parse the
C programs and construct the internal representation (IR).  Then, it
learns a static analyzer in two steps.  The first step, which is
SyGuS-guided decision tree learning, is implemented in 4,603 lines of
C++ code.  The second step, which proves the learned inference rules,
is implemented using the Z3~\cite{de2008z3} SMT solver.  Furthermore,
the learned analyzer (for detecting power side channels in
cryptographic software) is implemented in LLVM as an optimization
($opt$) pass.
We ran all experiments on a computer with 2.9 GHz Intel Core i5 CPU
and 8 GB RAM.  

\subsection{Benchmarks} \label{subsec:benchmark}

Our benchmarks are 568 programs with 2,691K lines of C code in total.
They implement well-known cryptographic algorithms such as AES and SHA-3.
Some of these programs are hardened by countermeasures, such as
reordered MAC-Keccak computation ~\cite{bertoni2012keccak}, masked
AES~\cite{blomer2004provably,barthe2015verified}, masked S-box
calculation~\cite{coron2013higher} and masked
multiplication~\cite{rivain2010provably}, to eliminate power side-channel
leaks.

We partition the benchmarks into two sets: $D_{train}$ for \GPS{}, and
$D_{test}$ for the learned analyzer.
The training set $D_{train}$ consists of 531 small programs gathered
from various public sources, including byte-masked
AES~\cite{yao2018fault}, random reduction of
S-box~\cite{zhang2017further}, common shares~\cite{coron2016faster},
and leak examples~\cite{eldib2014synthesis}.  Each benchmark is a
pair, consisting of a program AST and its distribution type, i.e, the
ground truth annotated by developers.
%
The testing set $D_{test}$ consists of 37 large programs, whose
statistics (the number of lines of code and inputs labeled public,
private, and random) are shown in Table~\ref{tbl:stats}.  Since these
programs are large, it is no longer practical to manually annotate the
ground truth; instead, we relies on the results of published tools: a
(manually-crafted) static analyzer~\cite{wang2019mitigating} for
B1-B20 and a formal verification tool~\cite{zhang2018sc} for B21-B37.

\begin{table}
\caption{Statistics of the  benchmark programs in $D_{test}$. 
}
\vspace{-0.5em}
\label{tbl:stats}
\centering
\scalebox{0.75}{
\begin{tabular}{|l|c|c|c|c||l|c|c|c|c|}
\hline
Name\ \ \   & LoC   & ${I}_{pub~}$ & ${I}_{priv}$ & ${I}_{rand}$ &
Name\ \ \   & LoC   & ${I}_{pub~}$ & ${I}_{priv}$ & ${I}_{rand}$  \\ \hline\hline
B1  &11    &0   & 2   & 2     & B2  &12     &0  & 2  & 2    \\ \hline
B3  &12    &0   & 1   & 2     & B4  &25     &1  & 1  & 3    \\ \hline
B5  &25    &1   & 1   & 3     & B6  &32     &1  & 1  & 3    \\ \hline
B7  &81    &1   & 1   & 7     & B8  &84     &1  & 1  & 7    \\ \hline
B9  &104   &1   & 1   & 7     & B10 &964    &1  & 16 &32    \\ \hline
B11 &1,130 &1   & 16  & 32    & B12 &1,256  &0  & 25 &75    \\ \hline
B13 &2,506 &0   & 25  & 125   & B14 &3,764  &0  & 25 &175   \\ \hline
B15 &8,810 &0   & 25  & 349   & B16 &13,810 &0  & 25 &575   \\ \hline
B17 &18,858&0   & 25  & 775   & B18 &23,912 &0  & 25 &975   \\ \hline
B19 &30,228&0   & 25  & 1,225 & B20 &34,359 &16 & 16 &1,232 \\ \hline\hline

B21 &79    &0   & 16  & 16    & B22  &67     &0  & 8  &16    \\ \hline
B23 &21    &0   & 2   & 2     & B24   &23     &0  & 2  &2     \\ \hline
B25  &27    &0   & 1   & 2     & B26  &32     &0  & 2  &2     \\ \hline
B27  &40    &0   & 2   & 3     & B28  &59     &0  & 3  &4     \\ \hline
B29  &60    &0   & 3   & 4     & B30 &66     &0  & 3  &4     \\ \hline
B31  &66  &0 & 3 & 4 & B32   &426k       &288   &288    &3205      \\ \hline
B33  &426k      &288    &288     &3205       & B34   &426k       &288   &288    &3205      \\ \hline
B35  &429k      &288    &288     &3205       & B36   &426k      &288   &288    & 3205     \\ \hline
B37  & 442k     &288    &288     & 3205      &    &   &   &    &      \\ \hline
\end{tabular}
}
\vspace{-1ex}
\end{table}

\subsection{Performance and Accuracy of the Learned Analyzer}
\label{subsec:analyzer}

To demonstrate the advantage of our learned analyzer (answer RQ1), we
compared our learned analyzer with the two existing
tools~\cite{zhang2018sc,wang2019mitigating} on the programs in
$D_{test}$.  Only our analyzer can handle all of the 37 programs.
Therefore, we compared the results of our analyzer with the tool
from \cite{wang2019mitigating} on B1-B20, and with the tool
from \cite{zhang2018sc} on B21-B37.  The results are shown in
Table~\ref{tbl:FSE19} and Table~\ref{tbl:CAV18}, respectively.

\begin{table}
\caption{Comparing the learned analyzer with the tool from \cite{wang2019mitigating}. 
}
\vspace{-0.5em}
\label{tbl:FSE19}
\centering
\scalebox{0.73}{
\begin{tabular}{|l|r|c|rrr|c|rrr|}
\hline
\multirow{2}{*}{Name}  & \multirow{2}{*}{\# AST}&  \multicolumn{4}{c|}{Manually Designed Analyzer~\cite{wang2019mitigating}} & \multicolumn{4}{c|}{Our Learned Analyzer}   \\\cline{3-10}
       &        &  Time (s) & UKD & SID & RUD    &  Time (s)   & UKD & SID & RUD
\\ \hline\hline
B1     &7      &0.061      &4 &    0 &    22       &\textbf{0.001}     &4 &0 &22
\\ \hline
B2     &6      &0.105      &\textbf{7} &    \textbf{0} &    20       &\textbf{0.001}     &\textbf{6} &\textbf{1} &20
\\ \hline
B3     &8      &0.099      &1 &    3 &    31       &\textbf{0.001}      &1 &3 &31
\\ \hline
B4     &11      &0.208  &\textbf{6}&12&31  &\textbf{0.001}      &\textbf{17}&12&20
\\ \hline
B5     &11      &0.216  &\textbf{1}&10&29  &\textbf{0.001}      &\textbf{11}&10&19
\\ \hline
B6     &14      &0.276  &\textbf{1}&15&48  &\textbf{0.001}      &\textbf{8}&15&41
\\ \hline
B7     &39      &0.213      &2 &    25 &   151      &\textbf{0.002}      &2 &    25 &   151
\\ \hline
B8     &39      &0.147      &4 &    42 &   249      &\textbf{0.002}      &4 &    42 &   249
\\ \hline
B9     &47      &0.266      &2 &   61 &   153      &\textbf{0.001}       &2 &   61 &   153
\\ \hline
B10    &522     &0.550      &31 &    12 &  2334      &\textbf{0.008}      &31 &    12 &  2334
\\ \hline
B11    &522     &0.447      &31&    0 &  2334      &\textbf{0.029}      &31&    0 &  2334
\\ \hline
B12    &426     &0.619      &52 &  300 &  2062      &\textbf{0.001}      &52 &  300 &  2062
\\ \hline
B13    &827     &1.102      &49 &  600 &  4030      &\textbf{0.006}      &49 &  600 &  4030
\\ \hline
B14    &1,228   &1.998      &49 &  900 &  5995      &\textbf{0.065}      &49 &  900 &  5995
\\ \hline
B15    &2,832   &16.999     &49&  2,100 &13861      &\textbf{0.107}      &49&  2,100 &13861
\\ \hline
B16    &4,436   &24.801     &49 &3,300 &21,723     &\textbf{2.663}       &49 &3,300 &21,723
\\ \hline
B17    &6,040   &59.120     &49 &4,500 &29,587      &\textbf{1.956}      &49 &4,500 &29,587
\\ \hline
B18    &7,644   &121.000 
                            &47 &5,700 &37,449      &\textbf{3.258}     &47 &5,700 &37,449
\\ \hline
B19    &9,649   &202.000 
                            &49 &7200 &47,280      &\textbf{5.381}       &49 &7200 &47,280
\\ \hline
B20    &13,826  &972.000 
                            &127 &26,330 &38,070      &\textbf{3.650}    &127 &26,330 &38,070
\\ \hline
\end{tabular}}
\end{table}

In both tables, Columns~1-2 show the benchmark name and the number of
AST nodes.  Columns~3-6 show the existing tool's analysis time and
result, including a breakdown in three types.  Similarly,
Columns~7-10 show our learned analyzer's time and result.
Note that in~\cite{wang2019mitigating}, the $\UKD$/$\SID$/$\RUD$ numbers 
were the number of variables of the LLVM IR, and thus larger
than the number of variables in the original programs. To be
consistent, we compared with their results in the same \mbox{manner in
Table~\ref{tbl:FSE19}.}

The results in Table~\ref{tbl:FSE19} and Table~\ref{tbl:CAV18} show
that our learned analyzer is much faster, especially on larger
programs such as B20 (3.6 seconds versus 16 minutes).
%
The reason why our analyzer is faster is because the manually-crafted analyses
~\cite{wang2019mitigating,zhang2018sc}
rely on evaluating set-relations (e.g. difference and intersection 
of sets of random variables), 
whereas our DSL syntax is designed without set operations to infer 
the same types, thus leading to faster analyses.
Although in general the set operation-based algorithm is
more accurate, it has excessive computational overhead.
Moreover, it does not always improve precision in practice.
Furthermore, the method in \cite{zhang2018sc} uses an SMT solver-based model counting technique to infer leak-free variables,
 which is significantly more \mbox{expensive than type inference}. 

As shown in Table~\ref{tbl:FSE19} and Table~\ref{tbl:CAV18}, 
by learning inference rules from data, we can achieve almost the same 
accuracy as manually-crafted analysis~\cite{wang2019mitigating,zhang2018sc} 
while avoiding the huge overhead.
Given the same definitions of distribution types
($\UKD$, $\SID$ and $\RUD$), both our learned rules and
manually-crafted analysis rules~\cite{wang2019mitigating,zhang2018sc} 
 can infer the non-leaky patterns,  thus recognizing the variable 
 types correctly under most benchmarks in
Table~\ref{tbl:FSE19} and Table~\ref{tbl:CAV18},
%
except for B4-B6 and B30, where set operations are required to prove
the leak-freedom of some variables.
 Recall that losing accuracy here
indicates that our learned rules infer the types more conservatively,
without losing soundness.
Nevertheless, our analyzer also increased accuracy in some other cases
(e.g., B2), due to its deeper constant propagation (which led to the
proof of more $\SID$ variables) while the existing
tool~\cite{wang2019mitigating} failed to do so, and conservatively
marked them as $\UKD$ variables.

\begin{table}
\caption{Comparing the learned analyzer with SCInfer~\cite{zhang2018sc}.}
\vspace{-0.5em}
\label{tbl:CAV18}
\centering
\scalebox{0.75}{
\begin{tabular}{|l|r|c|rrr|c|rrr|}\hline
\hline
\multirow{2}{*}{Name} & \multirow{2}{*}{\# AST}  & \multicolumn{4}{c|}{The SCInfer Verification Tool~\cite{zhang2018sc}} & \multicolumn{4}{c|}{Our Learned Analyzer}
\\\cline{3-10}
    &      & Time (s)  & UKD&SID&RUD          &  Time (s)   & UKD&SID&RUD
\\ \hline\hline
B21  &32   & 0.390     & 16&0&16        & \textbf{0.005}      & 16&0&16
\\ \hline
B22  &24   & 0.570     & 8&0&16         & \textbf{0.002}      & 8&0&16
\\ \hline
B23  &6    & 0.010     & 0&0&6          & \textbf{0.001}      & 0&0&6
\\ \hline
B24  &6    & 0.060     & 0&0&6          & \textbf{0.001}      & 0&0&6
\\ \hline
B25  &8    & 0.250     & 0&2&6          & \textbf{0.001}      & 0&2&6
\\ \hline
B26  &9    & 0.160     & 2&3&4          & \textbf{0.002}      & 2&3&4
\\ \hline
B27  &11   & 0.260     & 1&5&5          & \textbf{0.001}      & 1&5&5
\\ \hline
B28  &18   & 0.290     & 3&4&11         & \textbf{0.003}      & 3&4&11
\\ \hline
B29  &18   & 0.230     & 2&4&11         & \textbf{0.002}      & 2&4&12
\\ \hline
B30  &28   & 0.340     & 2&6&20         & \textbf{0.001}      & \textbf{8}&0&20
\\ \hline
B31  &28   & 0.500     & 2&7&19         & \textbf{0.001}      & 2&7&19
\\ \hline
B32  &197k & 3.800     & 0&6.4k&190.4k  & \textbf{3.180}          & 0&6.4k&190.4k
\\ \hline
B33  &197k & 2,828.000 
                       & 4.8k&6.4k&185.6k  & \textbf{3.260}         & 4.8k&6.4k&185.6k
\\ \hline
B34  &197k & 2,828.000 
                       & 3.2k&6.4k&187.2k  & \textbf{3.170}          & 3.2k&6.4k&187.2k
\\ \hline
B35  &198k & 2,828.000 
                       & 1.6k&8k&188.8k  & \textbf{3.140}          &\textbf{3.2k}&8k&187.2k
\\ \hline
B36  &197k & 2,828.000 
                       & 4.8k&6.4k&185.6k  & \textbf{3.150}           & 4.8k&6.4k&185.6k
\\ \hline
B37  &205k & 2,828.000 
                       & 17.6k&1.6k&185.6k  & \textbf{3.820}         & 17.6k&1.6k&185.6k
\\ \hline
\end{tabular}
}
\end{table}

%
\subsection{Effectiveness of Rule Induction and Soundness Verification}
\label{subsec:learnerProver}

To answer RQ2 and RQ3, we collected statistics while applying \GPS{}
to the 531 small programs in $D_{test}$, as shown in 
Table~\ref{tbl:result-learning}.  In total, \GPS{}
took 30 iterations to complete the entire learning process.
Column 1 shows the iteration number and Column 2 shows the time taken
by the \emph{learner} and the \emph{prover} together.  Columns~3-6
show the number of inference rules learned during each iteration,
together with their types ($\UKD$, $\SID$, and $\RUD$).
Similarly, Columns~7-10 show the number of verified inference rules and their types.

The next two columns show the following statistics: (1) the size of
the learned decision tree (\# Tree$_{learn}$) in terms of the number
of decision nodes; (2) the number of counter-examples (CEX) added by
the prover (\# AST$_{CEX}$), which are added to the 531 original
training programs before the next iteration starts.
The last column shows the number of features generated by SyGuS; these
features are also added to the original feature set and then used by the
learner during the next iteration.

\begin{table*}
\caption{Decision Tree Learning with Feature Synthesis (Different Iterations with \#AST = 531).}
\vspace{-0.5em}
\label{tbl:result-learning}
\centering
\scalebox{0.73}{
\begin{tabular}{|c||c|c|c|c|c|c|c|c|c|c|c|c|}
\hline
\multirow{2}{*}{Iteration}  
& \multirow{2}{*}{Time (s) } 
& \multicolumn{4}{c|}{\# Rules Learned}
& \multicolumn{4}{c|}{\# Rules Verified}
& \multirow{2}{*}{\# Tree$_{learn}$}
& \multirow{2}{*}{\# AST$_{CEX}$} 
& \multirow{2}{*}{\# Feature$_{syn}$}   
\\\cline{3-6} \cline{7-10}

    &   & \ \ \ Total\ \ \  & \textit{UKD}  &\textit{SID}  &\textit{RUD}  
        & \ \ \ Total\ \ \  & \textit{UKD}  &\textit{SID}  &\textit{RUD}  
        &  & &
\\\hline\hline
  1 & 1.316 & 9  & 2 & 2 & 5 & 5 & 2 & 1& 2 & 23 & 4 & 5
 \\\hline
  2 & 0.775 & 8  & 2 & 2 & 4 & 4 & 2 & 1& 1 & 17 & 9 & 7
  \\\hline
  3 & 1.115 & 8  & 2 & 2 & 4 & 5 & 2 & 2 & 1 & 24 & 13 & 9
 \\\hline
  4 & 0.511 & 8  & 2 & 2 & 4 & 5 & 2 & 1 & 2 & 18 & 18 & 10
 \\\hline
  5 & 0.513 & 8  & 2 & 2 & 4 & 7 & 2 & 2 & 3 & 27 & 21 & 11
 \\\hline
  6 & 0.537 & 8  & 2 & 2 & 4 & 6 & 2 & 2 & 2 & 24 & 25 & 12
 \\\hline
  7 & 0.510 & 8  & 2 & 2 & 4 & 6 & 2 & 2 & 2 & 26 & 29 & 13
 \\\hline
  8 & 0.512 & 8  & 2 & 2 & 4 & 6 & 2 & 2 & 2 & 28 & 33 & 14
 \\\hline
  9 & 0.511 & 8  & 2 & 2 & 4 & 6 & 2 & 2 & 2 & 30 & 37 & 15
 \\\hline
  10 & 0.524 & 8  & 2 & 2 & 4 & 5 & 2 & 2 & 1 & 32 & 41 & 16
  \\\hline
  11 & 0.546 & 8  & 2 & 2 & 4 & 4 & 2 & 2 & 0 & 34 & 45 & 17
  \\\hline
  12 & 0.556 & 8  & 2 & 2 & 4 & 4 & 2 & 2 & 0 & 36 & 49 & 18
  \\\hline
  13 & 0.550 & 8  & 2 & 2 & 4 & 5 & 2 & 2 & 1 & 38 & 53 & 19
  \\\hline
  14 & 0.540 & 8  & 2 & 2 & 4 & 6 & 2 & 2 & 2 & 40 & 57 & 20
  \\\hline
  15 & 0.542 & 8  & 2 & 2 & 4 & 4 & 2 & 2 & 0 & 42 & 61 & 21
  \\\hline
  16 & 0.552 & 8  & 2 & 2 & 4 & 6 & 2 & 2 & 2 & 44 & 65 & 22
  \\\hline
  17 & 0.577 & 8  & 2 & 2 & 4 & 5 & 2 & 2 & 1 & 46 & 69 & 23
  \\\hline
  18 & 0.598 & 8  & 2 & 2 & 4 & 6  & 2 & 2 & 2 & 48 & 73 & 24
  \\\hline
  19 & 0.571 & 8  & 2 & 2 & 4 & 6 & 2 & 2 & 2 & 50 & 77 & 25
  \\\hline
  20 & 0.673 & 8  & 2 & 2 & 4 & 5 & 1 & 2 & 2 & 52 & 82 & 26
  \\\hline
  21 & 0.526 & 8  & 2 & 2 & 4 & 3 & 1 & 2 & 0 & 54 & 87 & 27
  \\\hline
  22 & 0.525 & 8  & 3 & 2 & 3 & 6  & 3 & 2 & 1 & 35 & 91 & 27
  \\\hline
  23 & 0.697 & 9  & 3 & 2 & 4 & 7 & 2 & 2 & 3 & 37 & 93 & 27
  \\\hline
  24 & 0.700 & 9  & 3 & 2 & 4 & 8 & 2 & 2 & 4 & 38 & 95 & 28
  \\\hline
  25 & 0.691 & 7  & 2 & 2 & 3 & 6 & 1 & 2 & 3 & 36 & 97 & 29
  \\\hline
  26 & 0.707 & 7  & 2 & 2 & 3 & 6 & 1 & 2 & 3 & 37 & 99 & 30
  \\\hline
  27 & 0.716 & 7  & 2 & 2 & 3 & 6 & 1 & 2 & 3 & 38 & 101 & 31
  \\\hline
  28 & 0.540 & 7  & 2 & 2 & 3 & 6 & 1 & 2 & 3 &  39 & 102 & 32
  \\\hline
  29 & 0.534 & 7  & 2 & 2 & 3 & 6 & 1 & 2 & 3 & 39 & 103 & 32
  \\\hline
  30 & 0.528 & 7  & 2 & 2 & 3 & 7 & 2 & 2 & 3 & 39 & 104 & 32
  \\\hline\hline
TOTAL & 18.693 & 237 & 63 & 60 & 114 & 167 & 54 & 57 & 56 & 1071 & 1833 & 622
  \\\hline
\end{tabular}
}
\vspace{-2ex}
\end{table*}

Results in Table~\ref{tbl:result-learning} demonstrate the
efficiency of both the learner and the prover.
%
%
Within the \emph{learner}, the number of rules produced in each
iteration remains modest (8 on average), indicating it has
successfully avoided overfitting.
%
%
%
This is because the SyGuS solver is biased toward producing small
features which, by \emph{Occam's razor}, are likely to generalize
well.
%
Furthermore, any learned analysis rules have to pass the soundness
check, and this provides additional assurance against overfitting to
the training data.
%
%
%
The \emph{prover} either quickly verifies a rule, or quickly drops it after
adding a counter-example to prevent it from being learned again.
%
In early iterations, about half of all learned rules can be proved,
but as more counter-examples are added, the quality of the learned
rules improves, and thus the percentage of proved rules also increases.

\subsection{Threats to Validity}

Our experimental evaluation focused on cryptographic software, which
is structurally simple and, unlike general-purpose software, does not
exercise complicated language constructs.
It is an interesting direction of
future work to extend our techniques to these more general classes of
software code.

A notable limitation in our work is the assumption of the knowledge
base (KB). While KB is readily available for our application
(side-channel analysis), for other applications, it might be
non-trivial to construct. Furthermore, an incorrect KB might
compromise the soundness of the learned rules, although in this work,
we have carefully mitigated this threat by curating the proof rules
from previous
papers~\cite{zhang2018sc,eldib2014synthesis,barthe2015verified} that
have themselves formally verified the validity of these proof rules.

\section{Related Work}
\label{sec:related}

\noindent
\textit{Generating Analyzers from Examples.}
While there are prior works on learning static
analyzers~\cite{bielik2017learning,zaheer2016learning}, they do not
guarantee soundness.  For example, the analyzer learned by Bielik et
al.~\cite{bielik2017learning} is sound with respect to programs in the
training set, not all programs written in the same programming
language (JavaScript).  They also need to manually modify the training
programs to generate counter-examples, while our method generates
counter-examples automatically.

\vspace{0.25ex}
\noindent
\textit{Formal Specifications.}
There are also works on synthesizing static analyzers from formal
specifications, e.g., proof rules or
second-order logic
formulas~\cite{david2015using,grebenshchikov2012synthesizing,
  chen2019vfql} as opposed to training data.
However, they restrict the logic used to write the specification, and
as a result, may not be expressive enough to synthesize practical
analyzers.  Users are also expected to write correct
specifications, which is a non-trivial task.  In addition, they cannot
exploit the information provided by data.

\vspace{0.25ex}
\noindent
\textit{Learning-based Techniques.}
There are several prior techniques using machine learning to
conduct static program
analyses~\cite{katz2016estimating,raychev2015predicting,
  tripp2014aletheia,gvero2015synthesizing}.
Such techniques focus on finding a suitable program-to-feature
embedding. However, they require the user to perform feature
engineering, which is known to be laborious.
Some of these
techniques~\cite{mangal2015user,raychev2015predicting,heo2018adaptive,heo2016learning}
do not take advantage of new features that may be learned from
data; instead, they build classifiers based solely on existing
features.
In contrast, our method not only learn new analysis rules from data,
but also use SyGuS to synthesize new features automatically.

\vspace{0.25ex}
\noindent
\textit{Optimizing an Analyzer.}
It is possible to optimize an existing static
analyzer~\cite{grigore2016abstraction,
  heo2019resource,oh2015learning,heo2017machine,singh2018fast,katz2016estimating},
which can be achieved by adjusting the level of
abstraction~\cite{grigore2016abstraction, heo2019resource,wang2007using}, learn
heuristics and parameters~\cite{oh2015learning}, make
soundness-accuracy trade-offs~\cite{heo2017machine}, or select sound
transformers~\cite{singh2018fast}.
However, such techniques fundamentally differ from our method because
they assume the analyzer is already given, and focus on
optimizing its performance, whereas we focus on synthesizing a new
analyzer.

\vspace{0.25ex}
\noindent
\textit{Syntax-Guided Synthesis.}
Since we automatically generate new features, our method is related to
the large and growing body of work on SyGuS.
While SyGuS has been used in various
applications~\cite{rolim2017learning,mechtaev2016angelix,feng2017component,
  polikarpova2016program,abate2017automated,
  bavishi2019phoenix,xiong2017precise,le2017s3,hua2018towards,
  knoth2019resource,li2015resource}, none of them aims to synthesize a
provably sound static analyzer from data.
While some of these existing techniques can synthesize Datalog
rules~\cite{si2018syntax,zhang2014abstraction,si2019synthesizing}, the
focus has been on efficiency, e.g., pruning the search space based on
syntactic structures, instead of guaranteeing the soundness of the
analyzer.

\vspace{0.25ex}
\noindent
\textit{Power Side-Channel Analysis.}
In this work, we use power side-channel analysis as the application to
evaluate our method.  In this sense, it is related to the body of work
on side-channel leak
detection~\cite{chen2017precise,brennan2018symbolic,
bang2016string,zhang2018sc,guo2018adversarial,sung2018canal} as well as
mitigation~\cite{tizpaz2019quantitative,wang2019mitigating,
cauligi2019fact,wu2018eliminating,eldib2014synthesis,paulsen2019debreach}.
While static analysis engines used in these existing works are all
hand-crafted by domain experts, our method aims to synthesize the
static analysis from data automatically.

\section{Conclusions}
\label{sec:conclusion}

We have presented a data-driven method for learning a \emph{provably
sound} static analyzer to detect power side channels in cryptographic
software.  It relies on SyGuS to generate features and DTL to generate
analysis rules based on the synthesized features.  It verifies the
soundness of these learned analysis rules by solving a query
containment checking problem using an SMT solver.  We have evaluated
our method on C programs that implement well-known cryptographic
protocols and algorithms.  Our experimental results show that the
learning algorithm is efficient and the learned analyzer can achieve
the same empirical accuracy as state-of-the-art analysis tools while
being several orders-of-magnitudes faster.

\section*{Acknowledgments}

This research was supported in part by the U.S.\ National
Science Foundation (NSF) under grant CNS-1617203 and Office of
Naval Research (ONR) under grant \mbox{N00014-17-1-2896.} 
We thank the anonymous reviewers for their helpful feedback.
\clearpage
\bibliography{references}

\end{document}